\documentclass{article}

\usepackage[affil-it]{authblk}
\usepackage[usenames,dvipsnames]{xcolor}
\usepackage{amsfonts}
\usepackage{amsmath,amsthm,amssymb,dsfont}
\usepackage{enumerate}
\usepackage[english]{babel}
\usepackage{graphicx}	
\usepackage[caption=false]{subfig}
\usepackage[margin=3cm]{geometry}
\usepackage{url}
\usepackage{todonotes}
\usepackage{bbm}

\usepackage{tikz}
\usetikzlibrary{chains}
\usetikzlibrary{fit}
\usepackage{pgflibraryarrows}		
\usepackage{pgflibrarysnakes}		

\usepackage{epsfig}
\usetikzlibrary{shapes.symbols,patterns} 
\usepackage{pgfplots}

\usepackage{hyperref}
\hypersetup{colorlinks=true,citecolor=blue,linkcolor=blue,filecolor=blue,urlcolor=blue,breaklinks=true}

\usepackage{nicefrac}

\theoremstyle{plain}
\newtheorem{theorem}{Theorem}[section]
\newtheorem{lemma}[theorem]{Lemma}

\newtheorem{corollary}[theorem]{Corollary}
\newtheorem{proposition}[theorem]{Proposition}

\theoremstyle{definition}

\newcommand*{\cH}{\mathcal{H}}

\newcommand*{\cI}{\mathcal{I}}

\newcommand*{\cN}{\mathcal{N}}

\newcommand*{\cP}{\mathcal{P}}

\newcommand*{\cR}{\mathcal{R}}

\newcommand*{\cX}{\mathcal{X}}

\newcommand*{\N}{\mathbb{N}}
\newcommand*{\R}{\mathbb{R}}
\newcommand*{\C}{\mathbb{C}}

\newcommand*{\id}{\mathrm{id}}
\newcommand*{\poly}{\mathrm{poly}}

\newcommand*{\spec}{\mathrm{spec}}

\newcommand*{\tr}{\mathrm{tr}}
\newcommand*{\ket}[1]{| #1 \rangle}
\newcommand*{\bra}[1]{\langle #1 |}

\newcommand*{\ci}{\mathrm{i}} 
\newcommand*{\di}{\mathrm{d}} 

\newcommand{\norm}[1]{\left\lVert#1\right\rVert}

\allowdisplaybreaks    

\hyphenation{particu-lar}


\begin{document}

\title{\LARGE Multivariate Trace Inequalities}

\author[1]{David Sutter}
\author[2]{Mario Berta}
\author[3]{Marco Tomamichel}

\affil[1]{Institute for Theoretical Physics, ETH Zurich, Switzerland}
\affil[2]{Institute for Quantum Information and Matter, Caltech, USA}
\affil[3]{School of Physics, The University of Sydney, Australia}
\date{}

\maketitle


\begin{abstract}
We prove several trace inequalities that extend the Golden-Thompson and the Araki-Lieb-Thirring inequality to arbitrarily many matrices. In particular, we strengthen Lieb's triple matrix inequality. As an example application of our four matrix extension of the Golden-Thompson~inequality, we prove remainder terms for the monotonicity of the quantum relative entropy and strong sub-additivity of the von Neumann entropy in terms of recoverability. We find the first explicit remainder terms that are tight in the commutative case. Our proofs rely on complex interpolation theory as well as asymptotic spectral pinching, providing a transparent approach to treat generic multivariate trace inequalities.
\end{abstract}


\section{Introduction}

Trace inequalities are mathematical relations between different multivariate trace functionals. Often these relations are straightforward equalities if the involved matrices commute\,---\,and can be difficult to prove for the non-commuting case. 

Arguably one of the most powerful trace inequality is the celebrated \emph{Golden-Thompson (GT) inequality}~\cite{golden65,thompson65}. It states that for any two Hermitian matrices $H_1$ and $H_2$ we have
\begin{align} \label{eq_GToriginal}
\tr \exp(H_1 + H_2)  \leq \tr \exp(H_1) \exp(H_2) \ .
\end{align}
We note that in case $H_1$ and $H_2$ commute~\eqref{eq_GToriginal} holds with equality.
Inequality~\eqref{eq_GToriginal} has been generalized in various directions (see, e.g.,~\cite{breitenecker72,ruskai72,araki73,simon_book79,kilmek91,kosaki92,HP93,Li14}).
For example, it has been shown that it remains valid by replacing the trace with any unitarily invariant norm~\cite{segal69,lenard71,thompson71} and an extension to three non-commuting matrices was suggested in~\cite{Lieb73}.
The GT inequality has found applications ranging from statistical physics~\cite{thompson65} and random matrix theory~\cite{AW02,Tropp11} to quantum information theory~\cite{LieRus73_1,LieRus73}.

Straightforward extensions of this inequality to three matrices are incorrect, as discussed in Appendix~\ref{app_beta0}. In this work, for any $n\in \N$, Hermitian matrices $\{H_k\}_{k=1}^n$ and any $p \geq 1$, we show that
\begin{align} \label{eq_mainResGT_intro}
\log \norm{ \exp \left( \sum_{k=1}^n H_k \right) }_p \leq 
\int_{-\infty}^\infty \di t \, \beta_0(t)\, \log \norm{ \prod_{k=1}^n  \exp\bigl( (1+\ci t) H_k \bigr) }_p \ ,
\end{align}
where $\norm{\cdot}_p$ denotes the Schatten $p$-norm and $\beta_0$ is a fixed probability distribution on $\mathbb{R}$, independent of the other parameters. An extensive discussion of this result is provided in Section~\ref{sec_main}. The precise statement is given in Corollary~\ref{cor_GT_steinHirschman}. Note that the expression $\exp((1 + \ci t) H_k)$ decomposes as
$\exp(H_k)\exp(\ci t H_k)$ where the latter is a unitary rotation. 
 Since the Schatten $p$-norm is unitarily invariant it follows that the integrand in~\eqref{eq_mainResGT_intro} 
is independent of $t$ for $n = 2$.
Inequality~\eqref{eq_mainResGT_intro} thus constitutes an $n$-matrix extension of the GT inequality and further simplifies to~\eqref{eq_GToriginal} for $n=2$ and $p = 2$. For $n=3$ and $p=2$ our result strengthens 
\emph{Lieb's triple matrix inequality}~\cite{Lieb73}, as shown in Lemma~\ref{lem_LiebRep}. 

The GT inequality can be seen as a limiting case of the more general \emph{Araki-Lieb-Thirring (ALT) inequality}~\cite{Lieb76,araki90}. The latter states that, for any positive semi-definite matrices $A_1$ and $A_2$, and~$q > 0$, 
\begin{align} \label{eq_ALT_intro}
\tr\left(A_1^{\frac{r}{2}} A_2^r A_1^{\frac{r}{2}}\right)^{\frac{q}{r}}  \leq \tr \left(A_1^{\frac{1}{2}} A_2 A_1^{\frac{1}{2}}\right)^{q}   \quad \textnormal{if} \quad r \in (0,1] \,.
\end{align}
The inequality holds in the opposite direction for $r \geq 1$ by an appropriate substitution.\footnote{This can be seen by considering the substitution $A_i^r \leftarrow A_i$ for $i \in \{1,2\}$, $\frac{q}{r}\leftarrow q$, and $\frac{1}{r}\leftarrow r$.}
The GT inequality for Schatten $p$-norms is implied by the Lie-Trotter product formula in the limit $r \to 0$.
The ALT inequality has also been extended in various directions (see, e.g.,~\cite{kosaki92,ando94,wang95}).

In this work, we provide an $n$-matrix extension of the ALT inequality. For any $n\in \N$, positive semi-definite matrices $\{A_k\}_{k=1}^n$ and any $p \geq 1$, we show that
\begin{align}\label{eq_ALT_mainres1}
&\log \norm{ \left| \prod_{k=1}^n A_k^r \right|^{\frac{1}{r}} }_p \leq \int_{-\infty}^\infty \di t \, \beta_{r}(t)\, \log \norm{ \prod_{k=1}^n A_k^{1 + \ci  t} }_p \quad \textnormal{if} \quad r \in (0,1] \, , 
\end{align}
where $\beta_{r}$ are a family of probability distributions on $\mathbb{R}$, independent of the other parameters. In this article we use the convention that $0^z=0$ for any $z \in \C$.
We refer to Theorem~\ref{thm_ALT_Hirschman} for a  precise statement and discussion. Our extension of the GT inequality again follows in the limit $r \to 0$ by the Lie-Trotter product formula.
We also provide an extension of the ALT and GT inequality for general square matrices (see Theorem~\ref{thm_GT_general}).

We apply our results to quantum information theory and show how it can be used to prove strong sub-additivity. This yields remainder terms on the monotonicity of relative entropy in terms of recoverability, strengthening the Fawzi-Renner bound~\cite{FR14} and subsequent improvements~\cite{BHOS14,TB15,SFR15,wilde15,STH15,JRSWW15}. 
We find that for any positive semi-definite operator $\sigma$, and any trace-preserving completely positive map $\cN$, there exists a trace-preserving completely positive recovery map $\mathcal{R}_{\sigma,\mathcal{N}}$ that satisfies
\begin{align}
D(\rho\|\sigma) - D\big(\cN(\rho) \| \mathcal{N}(\sigma) \big) \geq D_{\mathbb{M}}\big(\rho \|\mathcal{R}_{\sigma,\mathcal{N}}\circ \mathcal{N}(\rho) \big) 
\end{align}
for any quantum state $\rho$. 
Here the bound is given in terms of the measured relative entropy, $D_{\mathbb{M}}(\cdot\|\cdot)$, as in~\cite{STH15}. The recovery map is the explicit universal (i.e., independent of $\rho$) rotated Petz recovery map introduced in~\cite{JRSWW15}. We thus provide the first explicit lower bound that is tight in the commutative case. A precise statement and further results are presented in Section~\ref{sec_app}.

We believe that the proof techniques used in this article, based on asymptotic spectral pinching and complex interpolation theory, yield a transparent method to derive multivariate trace inequalities which should be applicable beyond the extensions of the GT and ALT inequalities studied here. 

Section~\ref{sec_pinching} introduces the method of asymptotic spectral pinching and explains how it can be used to prove trace inequalities. Section~\ref{sec_interpolation} then explains how trace inequalities can be derived via complex interpolation theory. Readers interested in the proof of~\eqref{eq_mainResGT_intro} and~\eqref{eq_ALT_mainres1} may directly proceed to Section~\ref{sec_interpolation}.



\section{Trace inequalities via asymptotic spectral pinching}\label{sec_pinching}

One contribution of this article is the presentation of a transparent method, based on \emph{asymptotic spectral pinching}, that can be used to prove several trace inequalities. The results in this section hold for $p$-norms with $p > 0$ and are in this sense slightly more general then the results mentioned in the introduction. However, the asymptotic spectral pinching method does not yield an explicit form of the distributions $\beta_0$ and $\beta_r$ in~\eqref{eq_mainResGT_intro} and~\eqref{eq_ALT_mainres1}, respectively.


\subsection{The asymptotic spectral pinching method}

Let `$\geq$' denote the L\"owner partial order on positive semi-definite matrices.
Any positive semi-definite matrix $A$ has a decomposition $A=\sum_{\lambda} \lambda P_{\lambda}$ where $\lambda \in \spec(A) \subset \R$ are unique eigenvalues and $P_{\lambda}$ are mutually orthogonal projectors. 
The \emph{spectral pinching map} with respect to $A$ is
\begin{align}
\cP_{A} \ : \  X \mapsto \sum_{\lambda \in \spec(A)}  P_\lambda \,  X \,  P_{\lambda} \ .
\end{align}
Such maps are trace-preserving, completely positive, unital, self-adjoint, and can be viewed as dephasing operations that remove off-diagonal blocks of a matrix. Moreover, they satisfy the following properties: (i) $\cP_{A}[X]$ commutes with $A$ for any $X \geq 0$, (ii) $\tr \, \cP_{A}[X] A =\tr \, XA $ for any $X \geq 0$, and (iii) we have the pinching inequality~\cite{hayashi02},
\begin{align} \label{eq_hayashiPinchingIneqIntro}
\cP_{A}[X] = \sum_{\lambda \in \spec(A)} P_{\lambda}X P_{\lambda}
=\frac{1}{|\spec(A)|} \sum_{y=1}^{|\spec(A)|} U_y X U_y^{\dagger}
 \geq \frac{1}{|\spec(A)|} X \quad \textnormal{for all } X \geq 0  \, ,
\end{align}
where $\spec(A)=\{\lambda_1,\dots,\lambda_{|\spec(A)|}\}$ and $U_y=\sum_{z=1}^{|\spec(A)|} \exp(\frac{\ci 2\pi y z}{|\spec(A)|}) P_{\lambda_z}$ are unitaries. The inequality step in~\eqref{eq_hayashiPinchingIneqIntro} follows form the facts that $U_y X U_y \geq 0$ and $U_{|\spec(A)|} = \id$.
The following observation is crucial. Let $A$ be a positive semi-definite $d \times d$ matrix. The cardinality $|\spec(A^{\otimes m})|$ grows polynomially in $m$ due to the fact that the number of distinct eigenvalues of $A^{\otimes m}$ is bounded by the number of different types of sequences of $d$ symbols of length $m$, a concept widely used in information theory. More precisely~\cite[Lemma~II.1]{csiszar98} gives
\begin{align} \label{eq_types}
|\spec(A^{\otimes m})| \leq 
\left(\begin{matrix} m + d - 1 \\ d - 1 \end{matrix}
\right) 
\leq \frac{(m+d-1)^{d-1}}{(d-1)!}
= O\bigl(\poly(m)\bigr) \ ,
\end{align} 
where $\poly(m)$ denotes an arbitrary polynomial in $m$.
Another useful property of the pinching operation is that it exhibits the following integral representation. 
\begin{lemma} \label{lem_pinchingIntegral}
Let $A$ be positive definite. There exists a probability measure $\mu$ on $\R$ such that
\begin{align} 
\cP_{A}[X]  = \int \mu(\di t) \, A^{\ci t} X A^{-\ci t} \quad \textnormal{for all } X \geq 0  \ .
\end{align}
\end{lemma}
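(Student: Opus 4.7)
The plan is to diagonalise $A$ and reduce the claim to finding a probability measure whose characteristic function vanishes on a prescribed finite set of nonzero reals; this can be arranged by a simple convolution trick.

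Concretely, I would first write $A = \sum_{\lambda \in \spec(A)} \lambda P_\lambda$ and, using that $A$ is positive definite, set $\alpha_\lambda := \log \lambda \in \R$. A direct spectral computation then gives $A^{\ci t} X A^{-\ci t} = \sum_{\lambda,\lambda'} e^{\ci t(\alpha_\lambda - \alpha_{\lambda'})}\, P_\lambda X P_{\lambda'}$ for every $t \in \R$, so for any probability measure $\mu$ on $\R$,
\[
\int \mu(\di t)\, A^{\ci t} X A^{-\ci t} \;=\; \sum_{\lambda,\lambda' \in \spec(A)} \hat\mu(\alpha_\lambda - \alpha_{\lambda'})\, P_\lambda X P_{\lambda'}\,,
\]
where $\hat\mu(\xi) := \int e^{\ci t \xi}\,\mu(\di t)$. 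Comparing with $\cP_A[X] = \sum_\lambda P_\lambda X P_\lambda$, the task reduces to producing a probability measure $\mu$ with $\hat\mu(\alpha_\lambda - \alpha_{\lambda'}) = 1$ if $\lambda = \lambda'$ and $0$ otherwise.

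Since $\hat\mu(0) = 1$ automatically for any probability measure, the remaining job is to force $\hat\mu$ to vanish on the finite set $B := \{\alpha_\lambda - \alpha_{\lambda'} : \lambda \neq \lambda'\} \subset \R \setminus \{0\}$. For each $\beta \in B$ I would introduce the two-point probability measure $\mu_\beta := \tfrac{1}{2}(\delta_0 + \delta_{\pi/\beta})$, whose characteristic function $\tfrac{1}{2}\bigl(1 + e^{\ci \pi \xi/\beta}\bigr)$ vanishes at $\xi = \beta$. Enumerating $B = \{\beta_1, \ldots, \beta_M\}$ and defining $\mu := \mu_{\beta_1} \ast \cdots \ast \mu_{\beta_M}$ (a finite convolution of probability measures) yields a probability measure whose characteristic function factorises as $\hat\mu = \prod_{m=1}^M \hat\mu_{\beta_m}$ and hence vanishes on all of $B$, while still evaluating to $1$ at the origin. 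Expanding the convolution exhibits $\mu$ explicitly as the finitely supported uniform average
\[
\mu = \frac{1}{2^M} \sum_{S \subseteq \{1, \ldots, M\}} \delta_{\sum_{m \in S} \pi/\beta_m}\,.
\]

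There is no deep obstacle; the only nontrivial step is spotting the convolution construction above. The one edge case worth checking is $|\spec(A)| = 1$, where $B = \emptyset$ and the empty convolution gives $\mu = \delta_0$, which is consistent with $\cP_A[X] = X$ in that case since $A$ is then a positive multiple of the identity.
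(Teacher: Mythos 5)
Your argument is correct, and it reaches the same reduction as the paper but then solves it by a genuinely different construction. Both you and the paper diagonalise $A$, pass to $\alpha_\lambda = \log\lambda$, and reduce the claim to finding a probability measure $\mu$ whose characteristic function equals $1$ at $0$ and vanishes on the finite set $B = \{\alpha_\lambda - \alpha_{\lambda'} : \lambda \neq \lambda'\}$. At that point the paper constructs a \emph{single} absolutely continuous measure: it sets $\Delta := \min_{\beta\in B}|\beta|$, prescribes $\hat\mu$ to be the triangular function supported in $[-\Delta/2, \Delta/2]$, and inverts the Fourier transform to obtain the Fej\'er kernel $\mu(t) = \frac{1-\cos(\Delta t/2)}{\pi\Delta t^2/2}$, which automatically annihilates every $\beta\in B$ because $|\beta|\geq\Delta$. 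You instead tailor a finitely supported measure to $B$ element by element: for each $\beta$ you take the two-point measure $\mu_\beta = \tfrac12(\delta_0 + \delta_{\pi/\beta})$, whose characteristic function $\tfrac12(1+e^{\ci\pi\xi/\beta})$ vanishes at $\xi = \pm\beta$, and then convolve all of these, so $\hat\mu = \prod_m \hat\mu_{\beta_m}$ kills every gap while $\hat\mu(0)=1$. Both are correct; the paper's approach buys a ``universal'' band-limited envelope that handles any matrix with spectral gaps at least $\Delta$ in one shot and is a smooth probability density, whereas yours is more elementary (only characteristic functions of Dirac masses and the multiplicativity of $\hat{\cdot}$ under convolution) and produces an explicit discrete measure supported on at most $2^{M}$ points, at the cost of depending on each individual gap rather than just their minimum. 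Your observation that $\mu_\beta$ already handles $-\beta$ is correct, so in fact you could convolve over one representative of each pair $\{\beta,-\beta\}$ and halve $M$, but this is an optimisation rather than a necessity. The edge case check for $|\spec(A)|=1$ is also handled correctly.
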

The proof of Lemma~\ref{lem_pinchingIntegral} is given in Appendix~\ref{app_pinchingLemma}.
More information about pinching maps can be found in~\cite[Section~4.4]{carlen_book} or~\cite[Section~2.6.3]{Marco_book}.


\subsection{Illustrative example: intuitive proof of Golden-Thompson inequality}

Here we illustrate the technique by proving the original GT inequality~\eqref{eq_GToriginal}. 
Let $A$ and $B$ be two positive definite matrices. For any $m \in \N$ we find the following chain of inequalities
\begin{align}
\log \tr \exp(\log A + \log B)
&= \frac{1}{m} \log \tr \exp\bigl( \log A^{\otimes m} + \log B^{\otimes m} \bigr) \label{eq_toypinching0} \\
&\leq \frac{1}{m} \log \tr \exp\left( \log \cP_{B^{\otimes m}}[A^{\otimes m}] + \log B^{\otimes m} \right)  + \frac{\log \poly(m)}{m} \label{eq_toypinching1} \\
&=\frac{1}{m}  \log \tr \, \cP_{B^{\otimes m}}[A^{\otimes m}]B^{\otimes m}  + \frac{\log \poly(m)}{m} \label{eq_toypinching2}  \\
&= \log \tr \, A B + \frac{\log \poly(m)}{m} \label{eq_toypinching3}
\ .
\end{align}
The first equality~\eqref{eq_toypinching0} follows because the trace is multiplicative under tensor products.
The sole inequality in~\eqref{eq_toypinching1} follows by the pinching inequality~\eqref{eq_hayashiPinchingIneqIntro}, i.e., Property (iii), together with the fact that the logarithm is operator monotone and $\tr \exp(\cdot)$ is monotone.
Equality~\eqref{eq_toypinching2} uses Property~(i) which ensures that $\cP_{B^{\otimes m}}[A^{\otimes m}]$ commutes with $B^{\otimes m}$ and GT thus holds as an equality for these matrices. Equality~\eqref{eq_toypinching3} employs Property~(ii) and again the multiplicativity of the trace under tensor products.
Considering the limit $m\to \infty$ directly implies the GT inequality~\eqref{eq_GToriginal}.

As we will see later, this proof already suggests an extension of the GT inequality to $n$ matrices by iterative pinching. Let us emphasize the high-level intuition of the proof method presented above. We know that the GT inequality is trivial if the operators commute, and spectral pinching forces our operators to commute. At the same time the pinching should hopefully not destroy the operator which it acts on too much. This is indeed the case (guaranteed by the pinching inequality) if we consider an $m$-fold tensor product of our operators and the limit $m \to \infty$.


\subsection{A convexity result for Schatten quasi-norms}

Let us define the \emph{Schatten} $p$\emph{-norm} of any matrix $L$ as 
\begin{align}
\norm{L}_p:=\big(\tr |L|^p\big)^{\frac{1}{p}} \quad \textnormal{for} \quad p\geq 1 \ ,
\end{align}
where $|L|:=\sqrt{L^\dagger L}$. We extend this definition to all $p>0$, but note that $\norm{L}_p$ is not a norm for $p \in (0,1)$ since it does not satisfy the triangle inequality.
In the limit $p\to \infty$ we recover the \emph{operator norm} and for $p=1$ we obtain the \emph{trace norm}. 

Schatten norms are functions of the singular values and thus unitarily invariant. They satisfy $\|L\|_p = \|L^\dagger\|_p$ and $\|L\|_{2p}^2 = \|LL^\dagger\|_p= \|L^\dagger L\|_p$. They are also multiplicative under tensor products. We note that the Schatten $p$-norm with $p\geq 1$ is the unique norm that is unitarily invariant and multiplicative under tensor products~\cite[Theorem~4.2]{nechita11}.\footnote{Two properties that are crucial for the asymptotic spectral pinching method.} 

Due to the triangle inequality $p$-norms for $p \geq 1$ are convex. In particular, for any probability measure $\mu$ on a measurable space $(\cX,\Sigma)$ and a collection $\{L_x\}_{x \in \cX}$ of matrices, we have
\begin{align} \label{eq_convNorm}
  \norm{\int \mu(\di x) \, L_x}_p \leq 
   \int \mu(\di x) \norm{L_x}_p \,.
\end{align}
Quasi-norms with $p \in (0,1)$ are no longer convex.
However, we show that these quasi-norms still satisfy an asymptotic convexity property for tensor products of matrices in the following sense. We believe that this result may be of independent interest. 
\begin{lemma} \label{lem_normConvexNEW}
Let $p \in (0,1)$, $\mu$ be a probability measure on $(\cX,\Sigma)$ and consider a collection $\{A_x\}_{x \in \mathcal{X}}$ of positive semi-definite matrices. Then
\begin{align} 
\frac{1}{m} \log \norm{\int \mu(\di x) \, A_x^{\otimes m}}_p \leq \frac{1}{m} \log \int \mu(\di x) \norm{A_x^{\otimes m}}_p + \frac{\log \poly (m)}{m}   \, .
\end{align}
\end{lemma}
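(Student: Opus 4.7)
My plan is to reduce the possibly continuous measure $\mu$ to a finitely supported one with only $\poly(m)$ atoms via Carath\'eodory's theorem applied in a cleverly augmented finite-dimensional space, and then to dispatch the resulting discrete convex combination using Rotfel'd's trace inequality followed by a power-mean comparison. The geometric fact enabling the reduction is that the tensor power map $A \mapsto A^{\otimes m}$ sends $\mathrm{Herm}(\C^d)$ (where $d$ is the common dimension of the $A_x$'s) into the symmetric tensor power $\mathrm{Sym}^m(\mathrm{Herm}(\C^d))$, a real subspace of $\mathrm{Herm}(\C^{d^m})$ of dimension $\binom{d^2+m-1}{m} = \poly(m)$.

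Concretely, I would consider the augmented map
\[
\Phi : \cX \to \mathrm{Sym}^m(\mathrm{Herm}(\C^d)) \oplus \R, \qquad \Phi(x) := \bigl(A_x^{\otimes m},\, \norm{A_x^{\otimes m}}_p\bigr),
\]
whose $\mu$-barycenter equals $\bigl(B,\, \int \mu(\di x)\, \norm{A_x^{\otimes m}}_p\bigr)$ with $B := \int \mu(\di x)\, A_x^{\otimes m}$. Because the target is a real vector space of dimension $\poly(m)$, Carath\'eodory's theorem (in the form asserting that any barycenter of a probability measure on a subset of $\R^n$ can be realized as a convex combination of at most $n+1$ of its points) yields some $N \leq \poly(m)$, points $y_1,\ldots,y_N \in \cX$, and weights $q_i \geq 0$ with $\sum_i q_i = 1$ such that simultaneously
\[
B \;=\; \sum_{i=1}^N q_i\, A_{y_i}^{\otimes m}
\qquad\text{and}\qquad
\int \mu(\di x)\, \norm{A_x^{\otimes m}}_p \;=\; \sum_{i=1}^N q_i\, \norm{A_{y_i}^{\otimes m}}_p.
\]

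Armed with this reduction, I would apply Rotfel'd's inequality\,---\,namely $\tr f(X+Y) \leq \tr f(X) + \tr f(Y)$ for positive semi-definite $X,Y$ whenever $f \colon [0,\infty)\to[0,\infty)$ is concave with $f(0)=0$\,---\,to $f(t) = t^p$ and the PSD summands $q_i A_{y_i}^{\otimes m}$, iterating to get $\norm{B}_p^p \leq \sum_i q_i^p\, \norm{A_{y_i}^{\otimes m}}_p^p$. The scalar power-mean inequality $(\sum_i b_i^p)^{1/p} \leq N^{1/p - 1} \sum_i b_i$, with $b_i := q_i\, \norm{A_{y_i}^{\otimes m}}_p$, then gives
\[
\norm{B}_p \;\leq\; N^{1/p - 1} \sum_{i=1}^N q_i\, \norm{A_{y_i}^{\otimes m}}_p \;=\; \poly(m) \cdot \int \mu(\di x)\, \norm{A_x^{\otimes m}}_p,
\]
and taking $\frac{1}{m}\log$ is the claim.

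The step deserving the most care is the \emph{simultaneous} matching of the operator $B$ and of the scalar $\int \mu(\di x)\, \norm{A_x^{\otimes m}}_p$ in Carath\'eodory's theorem: a direct application to the operator component alone would reproduce $B$ but would not control the weighted sum of norms by the original integral, so augmenting the target by one extra copy of $\R$ is essential (and only costs one additional atom, preserving the $\poly(m)$ bound on $N$). Once the finite-support reduction is in place, Rotfel'd controls the non-convexity of $\tr(\cdot)^p$ for $p<1$ and the power-mean inequality handles the $\ell^p$-to-$\ell^1$ conversion, both of which are classical.
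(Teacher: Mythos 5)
Your argument is correct and follows essentially the same path as the paper's: Carath\'eodory applied to the augmented target $\bigl(A_x^{\otimes m}, \norm{A_x^{\otimes m}}_p\bigr)$ reduces to $N = \poly(m)$ atoms, after which the $p$-quasi-norm triangle inequality for PSD summands (your Rotfel'd step; the paper invokes the same bound via Kittaneh) and the power-mean comparison close the argument. The only variation is how you see that the range is $\poly(m)$-dimensional---directly via $A_x^{\otimes m}\in\mathrm{Sym}^m(\mathrm{Herm}(\C^d))$ rather than the paper's purification to $(\ket{v_x}\bra{v_x})^{\otimes m}$ supported on the symmetric subspace of $(\cH\otimes\cH')^{\otimes m}$---and your observation is equally valid and slightly more direct.
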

The proof is given in Appendix~\ref{app_TensorConc}.
Combining this with~\eqref{eq_convNorm} shows that for all $p > 0$ we have the following quasi-convexity property
\begin{align}
\frac{1}{m} \log \norm{\int \mu(\di x) \, A_x^{\otimes m}}_p \leq \log \sup_{x \in \cX} \norm{A_x}_p + \frac{\log \poly (m)}{m} \,.  \label{eq_TCpropnew}
\end{align}


\subsection{Main results and proofs via pinching}

In this section we present two results obtained via the spectral pinching method, which are extensions of the ALT and the GT inequality, respectively, for arbitrarily many matrices. We want to emphasize that in addition to the fact that Theorem~\ref{thm_ALT_pinching} is valid for Schatten quasi-norms, i.e., $p\in (0,1)$, the proof technique via pinching has the advantage of being transparent and intuitive.

\begin{theorem} \label{thm_ALT_pinching}
Let $p> 0$, $r \in (0,1]$, $n\in \N$ and consider a collection $\{A_k\}_{k=1}^n$ of positive semi-definite matrices. Then  
\begin{align}
\norm{ \left| \prod_{k=1}^n A_k^r \right|^{\frac{1}{r}} }_p \leq \sup_{t \in \R^{n}} \norm{ \prod_{k=1}^n A_k^{1+\ci t_k} }_p \,.
\end{align}
\end{theorem}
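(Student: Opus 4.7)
The proof follows the asymptotic spectral pinching method developed in Section~\ref{sec_pinching}, in close analogy with the Golden-Thompson example treated there. By continuity I may assume each $A_k$ is strictly positive definite. By multiplicativity of the Schatten norm and of matrix products under tensor products, the tensor-power identities $|(\prod_k A_k^r)^{\otimes m}|^{1/r}=(|\prod_k A_k^r|^{1/r})^{\otimes m}$ and $\prod_k(A_k^{\otimes m})^{1+\ci t_k}=(\prod_k A_k^{1+\ci t_k})^{\otimes m}$ reduce the claim, for $\tilde{A}_k:=A_k^{\otimes m}$, to the intermediate estimate
\[
\norm{\left|\prod_{k=1}^n \tilde{A}_k^r\right|^{\frac{1}{r}}}_p \leq \poly(m)\cdot \sup_{t\in\R^n}\norm{\prod_{k=1}^n \tilde{A}_k^{1+\ci t_k}}_p\,,
\]
since after taking $m$-th roots and letting $m\to\infty$ the prefactor $\poly(m)^{1/m}\to 1$ and the theorem emerges.

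Next, I would iteratively pinch to produce commuting operators. Set $B_1:=\tilde{A}_1$ and, for $k\geq 2$, $B_k:=(\cP_{B_1}\circ\cdots\circ\cP_{B_{k-1}})[\tilde{A}_k]$. By induction the $B_1,\ldots,B_{k-1}$ commute pairwise, so the composition is well-defined and $B_k$ commutes with each $B_j$ for $j<k$. Moreover every $B_k$ is permutation-invariant on $(\C^d)^{\otimes m}$---pinching by permutation-invariant operators preserves permutation invariance, and $\tilde{A}_k$ is such---hence by Schur-Weyl duality, analogous to~\eqref{eq_types}, has at most $\poly(m)$ distinct eigenvalues. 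Iterating the pinching inequality~\eqref{eq_hayashiPinchingIneqIntro} then yields $\tilde{A}_k\leq \poly(m)\,B_k$ in operator order.

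Since $r\in(0,1]$, the L\"owner-Heinz inequality (operator monotonicity of $t\mapsto t^r$) gives $\tilde{A}_k^r\leq \poly(m)^r\,B_k^r$. Combining this with monotonicity of Schatten norms under positive semi-definite ordering in each factor of a product---a standard consequence of Weyl-type singular value inequalities---and cyclic invariance of the norm, I can iteratively replace each $\tilde{A}_k^r$ inside the product by $\poly(m)^r\,B_k^r$. Because the $B_k$ commute and are positive semi-definite, $\prod_k B_k^r=\big(\prod_k B_k\big)^r$ is positive semi-definite, and the identity $\||X|^{1/r}\|_p=\|X\|_{p/r}^{1/r}$ together with $\|Y^r\|_{p/r}=\|Y\|_p^r$ for $Y\geq 0$ produces
\[
\norm{\left|\prod_k \tilde{A}_k^r\right|^{\frac{1}{r}}}_p \leq \poly(m)\cdot \norm{\prod_k B_k}_p\,.
\]

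Finally, applying Lemma~\ref{lem_pinchingIntegral} iteratively, $\prod_k B_k$ expands as a multi-fold integral of products of $\tilde{A}_k$'s with interleaved unitary rotations. Convexity of $\norm{\cdot}_p$ for $p\geq 1$, or its quasi-convex counterpart~\eqref{eq_TCpropnew} obtained via Lemma~\ref{lem_normConvexNEW} for $p\in(0,1)$, bounds the norm of the integral by the supremum of the integrands; rearranging each integrand via cyclic and unitary invariance of the Schatten norm into the form $\prod_k \tilde{A}_k^{1+\ci t_k}$ then gives $\norm{\prod_k B_k}_p\leq \sup_t \norm{\prod_k \tilde{A}_k^{1+\ci t_k}}_p$, and sending $m\to\infty$ in the intermediate inequality concludes the proof. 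The hard part will be this last step: the nested integral representations produce unitaries that are products of powers $B_j^{\ci s}$ of the pinched operators rather than $\tilde{A}_k^{\ci t_k}$ directly, and reshaping them to the target form requires careful bookkeeping that exploits the pairwise commutation of the $B_j$ together with cyclic invariance of the Schatten norm.
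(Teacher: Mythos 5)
Your overall architecture is in the spirit of the paper's pinching proof, but it departs at a crucial organizational point and this creates two genuine gaps.

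The paper never pins down ``pinched versions'' $B_k$ of the individual $\tilde{A}_k$. Instead it starts from the \emph{symmetric sandwich} $\tr\bigl(A_1^{r/2}A_2^{r/2}\cdots A_n^r\cdots A_2^{r/2}A_1^{r/2}\bigr)^{q/r}$, pinches only the \emph{innermost} factor $(A_n^r)^{\otimes m}$ by $A_{n-1}^{\otimes m}$, uses monotonicity of $\tr(\cdot)^{q/r}$ on the sandwiched operator inequality, pulls the power out via operator concavity of $t\mapsto t^r$, \emph{merges} the pinched object into a single innermost factor using commutation, and repeats from the inside out. Two consequences: (a) every operator inequality is applied in the congruence form $V^\dagger X V\leq V^\dagger Y V$ followed by monotonicity of $\tr(\cdot)^{q/r}$, which is sound; and (b) every pinching is with respect to some $A_k^{\otimes m}$ \emph{directly}, so the types bound~\eqref{eq_types} applies and, crucially, Lemma~\ref{lem_pinchingIntegral} produces rotations $A_k^{\ci t_k}$ of exactly the target form.

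Your ``replace each $\tilde{A}_k^r$ inside the product by $\poly(m)^r B_k^r$'' step does not have the justification you give it. Schatten norms of products of non-commuting positive operators are \emph{not} cyclically invariant (for example, with $X_1=\proj{0}$ and $X_2$ the projector onto $(\ket 0+\ket 1)/\sqrt2$ one has $\norm{X_1X_2}_1=1/\sqrt2$ while $\norm{X_2^{1/2}X_1X_2^{1/2}}_1=1/2$). Moreover, for fixed $A,B$ the singular values $s_i(ACB)$ are not monotone in $C$ with respect to the L{\"o}wner order: taking $A=\proj{0}$, $B=\proj{1}$ and $C$ the projector onto $(\ket 0+\ket 1)/\sqrt2\leq\id$ gives $s_1(ACB)=1/2$ while $s_1(AB)=0$. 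So the ``Weyl-type'' monotonicity-in-each-factor of a general product, which you invoke, is false; the only safe monotonicity is the sandwiched one the paper uses, and that forces the inside-out recursion on the symmetric form rather than a parallel all-at-once replacement.

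The second gap you yourself flag is also real, and follows from the same organizational choice: since your $B_k$ are produced by pinching with respect to the \emph{already pinched} $B_1,\ldots,B_{k-1}$, the integral representation of Lemma~\ref{lem_pinchingIntegral} yields rotations by $B_j^{\ci s}$, which are not of the form $A_k^{\ci t_k}$ and cannot be massaged into it by commutation and cyclic invariance (the latter is not even available, as above). The paper sidesteps this entirely by pinching only by $A_k^{\otimes m}$. Relatedly, the Schur--Weyl counting you invoke to bound $|\spec(B_k)|$ is a heavier tool than the paper needs (and only becomes necessary because you pinch by pinched objects). The fix is to re-organize your argument along the paper's lines: work with the symmetric sandwich, pinch the innermost factor by the next $A^{\otimes m}$, use $\tr(\cdot)^{q/r}$-monotonicity after congruence, use operator concavity of $t\mapsto t^r$ to move the exponent, merge, and iterate; only then apply Lemma~\ref{lem_pinchingIntegral} and Lemma~\ref{lem_normConvexNEW}.
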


Before we present the proof, let us given an equivalent statement that follows by a simple substitution $p \leftarrow 2q$ and $A_k \leftarrow \sqrt{A_k}$ for $q > 0$, namely
\begin{multline} \label{eq_eqStat}
  \tr \left( A_1^\frac{r}{2} A_2^\frac{r}{2}  \cdots A_{n-1}^\frac{r}{2}  A_n^r A_{n-1}^\frac{r}{2}  \cdots A_2^\frac{r}{2}  A_1^\frac{r}{2}  \right)^\frac{q}{r} \\
  \leq \sup_{t \in \R^{n-2}} \tr \left( A_1^\frac{1}{2}  A_2^\frac{1+\ci t_2}{2} \cdots A_{n-1}^\frac{1 + \ci t_{n-1}}{2} A_n A_{n-1}^\frac{1 - \ci t_{n-1}}{2} \cdots A_2^\frac{1- \ci t_2}{2} A_1^\frac{1}{2} \right)^{q} \,.
\end{multline}
For $n = 2$ the right-hand side of~\eqref{eq_eqStat} is independent of $t$ and we recover the ALT inequality in~\eqref{eq_ALT_intro}.

\begin{proof}[Proof of Theorem~\ref{thm_ALT_pinching}]
We prove the result for positive definite matrices and note that the generalization to positive semi-definite matrices follows by continuity under the convention that $0^z=0$ for any $z \in \mathbb{C}$.
For convenience of exposition we provide only the proof of Theorem~\ref{thm_ALT_pinching} for three matrices (i.e, $n=3$). The generalization to $n$ matrices follows by appropriately iterating the technical steps presented below.
  Using the multiplicativity of the trace under tensor products, we write
  \begin{align}
  \log \tr \left( A_1^\frac{r}{2} A_2^\frac{r}{2} A_3^r  A_2^\frac{r}{2}  A_1^\frac{r}{2}  \right)^\frac{q}{r} 
   &= \frac{1}{m} \log \tr \left( (A_1^\frac{r}{2})^{\otimes m} (A_2^\frac{r}{2})^{\otimes m} (A_3^r)^{\otimes m}  (A_2^\frac{r}{2})^{\otimes m}  (A_1^\frac{r}{2})^{\otimes m}  \right)^\frac{q}{r} 
  \end{align}
  Then, employing the pinching inequality and the monotonicity of $\tr(\cdot)^{q/r}$, we find
  \begin{align}
    \log \tr \left( A_1^\frac{r}{2} A_2^\frac{r}{2} A_3^r  A_2^\frac{r}{2}  A_1^\frac{r}{2}  \right)^\frac{q}{r} 
    &\!\leq \frac{1}{m} \log \tr \left( (A_1^\frac{r}{2})^{\otimes m} (A_2^\frac{r}{2})^{\otimes m} \cP_{A_2^{\otimes m}}\big[ (A_3^r)^{\otimes m} \big]  (A_2^\frac{r}{2})^{\otimes m}  (A_1^\frac{r}{2})^{\otimes m}  \right)^\frac{q}{r} \!\!   + o(1) \\
    &\!\leq \frac{1}{m} \log \tr \left( (A_1^\frac{r}{2})^{\otimes m} (A_2^\frac{r}{2})^{\otimes m} \cP_{A_2^{\otimes m}}\big[ A_3^{\otimes m} \big]^r  (A_2^\frac{r}{2})^{\otimes m}  (A_1^\frac{r}{2})^{\otimes m}  \right)^\frac{q}{r} \!\!   + o(1) \\
    &\!= \!\frac{1}{m} \log \tr \!\left( (A_1^\frac{r}{2})^{\otimes m} \big((A_2^\frac{1}{2})^{\otimes m} \cP_{A_2^{\otimes m}}\big[ A_3^{\otimes m} \big]  (A_2^\frac{1}{2})^{\otimes m}\big)^r  (A_1^\frac{r}{2})^{\otimes m}  \right)^\frac{q}{r} \!\!\!   + o(1) \ ,
  \end{align}
  where $o(1)$ simply denotes an additive term that vanishes as $m \to \infty$. The second inequality uses the fact that $t \mapsto t^r$ is operator concave for $r \in (0,1]$. The final step uses property (i) of pinching maps. Repeating these steps shows that 
\begin{multline}
 \log \tr \left( A_1^\frac{r}{2} A_2^\frac{r}{2} A_3^r  A_2^\frac{r}{2}  A_1^\frac{r}{2}  \right)^\frac{q}{r}  \\
 \leq  \frac{1}{m} \log \tr \!\left( (A_1^\frac{1}{2})^{\otimes m} \cP_{A_1^{\otimes m}} \big[ (A_2^\frac{1}{2})^{\otimes m} \cP_{A_2^{\otimes m}}\big[ A_3^{\otimes m} \big]  (A_2^\frac{1}{2})^{\otimes m}\big]  (A_1^\frac{1}{2})^{\otimes m}  \right)^q    + o(1) \ .
\end{multline}
The integral representation of pinching maps (see Lemma~\ref{lem_pinchingIntegral}) ensures that there exist probability measures $\mu$ and $\nu$ on $\R$ such that
\begin{align}
 &\log \tr \left( A_1^\frac{r}{2} A_2^\frac{r}{2} A_3^r  A_2^\frac{r}{2}  A_1^\frac{r}{2}  \right)^\frac{q}{r} \nonumber \\
 &\hspace{4mm}\leq\frac{1}{m} \log \tr \left( \int \mu(\di t_1) \int \nu(\di t_2) (A_1^{\frac{1}{2}+\ci t_1})^{\otimes m} (A_2^{\frac{1}{2}+\ci t_2})^{\otimes m} A_3^{\otimes m}  (A_2^{\frac{1}{2}-\ci t_2})^{\otimes m} (A_1^{\frac{1}{2}-\ci t_1})^{\otimes m} \right)^q +o(1) \\
 &\hspace{4mm}\leq \sup_{t \in \R^2} \log \tr \left( A_1^{\frac{1}{2} + \ci t_1} A_2^{\frac{1}{2} + \ci t_2} A_3 A_2^{\frac{1}{2} - \ci t_2} A_1^{\frac{1}{2} - \ci t_1} \right)^q +o(1) \\
 &\hspace{4mm}= \sup_{t \in \R} \log \tr \left( A_1^{\frac{1}{2} } A_2^{\frac{1}{2} + \ci t} A_3 A_2^{\frac{1}{2} - \ci t} A_1^{\frac{1}{2}} \right)^q +o(1) \ ,
\end{align}
where the second inequality uses Lemma~\ref{lem_normConvexNEW}. The final step follows from the fact that Schatten (quasi) norms are unitarily invariant. Considering the limit $m\to \infty$ implies~\eqref{eq_eqStat} and thus completes the proof.
\end{proof}

The multivariate Lie-Trotter product formula (see, e.g.,~\cite[Problem~IX.8.5]{bhatia_book}) states that 
\begin{align} \label{eq_LieTrotter}
  \lim_{r \searrow 0} \big( \exp(r L_1) \exp(r L_2)\, \cdots\, \exp(r L_n) \big)^{\frac{1}{r}} = \exp\left( \sum_{k=1}^n 
  L_k \right)
\end{align}
for square matrices $\{L_k \}_{k=1}^n$. This allows us to derive a multivariate extension of the GT inequality as a limit of the above extended ALT inequality in the limit $r \to 0$.
In particular, combining the product formula with Theorem~\ref{thm_ALT_pinching} implies an extension of the GT inequality to $n$ matrices.

\begin{corollary} \label{cor_GT_pinching}
Let $p> 0$, $n\in \N$ and consider a collection $\{H_k\}_{k=1}^n$ of Hermitian matrices. Then  
\begin{align} \label{eq_ALT_normal}
\norm{ \exp\left( \sum_{k=1}^n H_k \right) }_p \leq \sup_{t \in \R^{n}} \norm{ \prod_{k=1}^n \exp\big( (1+\ci t_k) H_k \big) }_p \,.
\end{align}
\end{corollary}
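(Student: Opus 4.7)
The plan is to derive Corollary~\ref{cor_GT_pinching} directly from Theorem~\ref{thm_ALT_pinching} by choosing $A_k = \exp(H_k)$ (which is positive definite since $H_k$ is Hermitian) and then sending the exponent $r \searrow 0$ using the multivariate Lie-Trotter product formula~\eqref{eq_LieTrotter}. With this choice, Theorem~\ref{thm_ALT_pinching} yields
\begin{align}
\norm{ \bigl| M_r \bigr|^{1/r} }_p \leq \sup_{t \in \R^n} \norm{ \prod_{k=1}^n \exp\big( (1+\ci t_k) H_k \big) }_p, \qquad M_r := \prod_{k=1}^n \exp(r H_k),
\end{align}
and the right-hand side is already in the form required by the corollary and, crucially, independent of $r$.

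The only real work is to identify the limit of the left-hand side as $r \searrow 0$. Since $|M_r|^{1/r} = (M_r^\dagger M_r)^{1/(2r)}$, we write out
\begin{align}
M_r^\dagger M_r = \exp(r H_n)\cdots\exp(r H_2)\exp(2 r H_1)\exp(r H_2)\cdots \exp(r H_n),
\end{align}
a product of $2n-1$ exponentials with exponents $r H_n, \ldots, r H_2, 2r H_1, r H_2, \ldots, r H_n$. Applying~\eqref{eq_LieTrotter} (with the substitution $r \leftarrow 2r$) to this symmetric product gives
\begin{align}
\lim_{r \searrow 0} \bigl| M_r \bigr|^{1/r} \;=\; \lim_{r \searrow 0} (M_r^\dagger M_r)^{1/(2r)} \;=\; \exp\!\Bigl( \tfrac{1}{2}\bigl( H_n + \cdots + H_2 + 2 H_1 + H_2 + \cdots + H_n \bigr) \Bigr) \;=\; \exp\!\Bigl( \sum_{k=1}^n H_k \Bigr).
\end{align}
Because we are in finite dimensions and the Schatten $p$-(quasi-)norm is continuous, passing to the limit $r \searrow 0$ in the inequality furnished by Theorem~\ref{thm_ALT_pinching} produces exactly~\eqref{eq_ALT_normal}.

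The mildly subtle step is the second one: one must resist the temptation to write $M_r^{1/r}$ and appeal to Lie-Trotter directly, since the left-hand side of Theorem~\ref{thm_ALT_pinching} involves the modulus $|M_r|$ rather than $M_r$ itself. Symmetrising by passing to $M_r^\dagger M_r$ and using Lie-Trotter on the resulting $(2n-1)$-fold product is what makes the exponents telescope to $\sum_k H_k$. Everything else is bookkeeping.
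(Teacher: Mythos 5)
Your proposal is correct and follows the same route the paper sketches: apply Theorem~\ref{thm_ALT_pinching} with $A_k = \exp(H_k)$ and take $r \searrow 0$ via the Lie-Trotter formula~\eqref{eq_LieTrotter}. The paper leaves the limit argument implicit (it is effectively handled by the substitution leading to~\eqref{eq_eqStat}, whose left-hand side is already symmetrized); your explicit symmetrization through $M_r^\dagger M_r$ fills in the same gap correctly.
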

For $n=2$ the right-hand side term of~\eqref{eq_ALT_normal} is independent of $t$ and we recover the GT inequality~\eqref{eq_GToriginal} for the choice $p=2$. 


\section{Trace inequalities via interpolation theory} \label{sec_interpolation}

For $p$-norms we can prove a more explicit and also more general version of Theorem~\ref{thm_ALT_pinching} based on an entirely different technique\,---\,\emph{complex interpolation theory}.


\subsection{The complex interpolation method} 
The main ingredient for most of our proofs in this section is a complex interpolation result for Schatten norms, commonly attributed to Stein~\cite{stein56}, and based on Hirschman's improvement of the Hadamard three-lines theorem~\cite{H52}. Epstein~\cite{epstein73} showed how interpolation theory can be used in matrix analysis. 

Complex interpolation theory has recently garnered attention in quantum information theory for proving entropy inequalities. Beigi~\cite{beigi13} and Dupuis~\cite{dupuis15} used variations of the Riesz-Thorin theorem based on Hadamard's three line theorem to show properties of sandwiched R\'enyi divergence and conditional R\'enyi entropy, respectively. Wilde~\cite{wilde15} first used these techniques to derive remainder terms for the monotonicity of quantum relative entropy (see Section~\ref{sec_app} for more details). Extensions and further applications of this approach are discussed by Dupuis and Wilde~\cite{DW15}. Hirschmann's refinement was first studied in this context by Junge \emph{et al.}~\cite{JRSWW15}, where the following theorem essentially appeared: 

\begin{theorem}[Stein-Hirschman]\label{thm_hirschman}
Let $S:=\left\{  z\in\mathbb{C}:0\leq\operatorname{Re}(z)
\leq1\right\} $ and let $G$ be a map from $S$ to bounded linear operators on a separable Hilbert space that is holomorphic in the interior of $S$ and continuous on the boundary. Let $p_{0},p_{1}\in [1,\infty]$, $\theta\in(0,1)$, define $p_{\theta}$ by
\begin{equation} 
\frac{1}{p_{\theta}}=\frac{1-\theta}{p_{0}}+\frac{\theta}{p_{1}} \qquad \textrm{and} \qquad \beta_{\theta}(t) := \frac{\sin(\pi\theta)}{2\theta\bigl( \cosh(\pi
t)+\cos(\pi\theta)\bigr)} \, . \label{eq_densi}
\end{equation}
Then, if furthermore $z \mapsto \norm{G(z)}_{p_{\operatorname{Re}(z)}}$ is uniformly bounded on $S$,\footnote{In fact, we only need that $\sup_{z \in S} \exp(-a |\!\operatorname{Im}(z)|) \log \norm{G(z)}_{p_{\operatorname{Re}(z)}} \leq A$ for some constants $A<\infty$ and $a<\pi$.} the following bound holds:
\begin{equation}
\log \left\Vert G(\theta)\right\Vert _{p_{\theta}}  \leq
\int_{-\infty}^{\infty} \di t\ \Bigl(  \beta_{1-\theta}(t)\log   \left\Vert
G(\ci t)\right\Vert _{p_{0}}^{1-\theta} +\beta_{\theta}(t)\log
\left\Vert G(1+\ci t)\right\Vert _{p_{1}}^{\theta} \Bigr) \, .\label{eq:oper-hirschman}%
\end{equation}
\end{theorem}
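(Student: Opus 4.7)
The plan is to reduce the operator-valued statement to the scalar Hirschman-Hadamard inequality on the strip $S$ via Schatten norm duality. By duality there exists $Y$ with $\|Y\|_{p_\theta'} = 1$ (where $p_\theta'$ is the H\"older conjugate of $p_\theta$) such that $\|G(\theta)\|_{p_\theta} = \tr(G(\theta) Y)$. Taking the polar decomposition $Y = U|Y|$, I would build an operator-valued interpolant
\begin{equation*}
Y(z) := U\,|Y|^{\alpha(z)}, \qquad \alpha(z) := p_\theta'\!\left(\frac{1-z}{p_0'} + \frac{z}{p_1'}\right),
\end{equation*}
which is entire in $z$ and satisfies $\alpha(\theta) = 1$, hence $Y(\theta) = Y$. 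Because $\operatorname{Re}(\alpha(\ci t))\, p_0' = \operatorname{Re}(\alpha(1+\ci t))\, p_1' = p_\theta'$, the identity $||Y|^{\alpha(z)}| = |Y|^{\operatorname{Re}(\alpha(z))}$ gives $\|Y(\ci t)\|_{p_0'} = \|Y(1+\ci t)\|_{p_1'} = 1$ for every $t \in \R$. The scalar function $f(z) := \tr(G(z) Y(z))$ is then holomorphic in the interior of $S$, continuous on $S$, satisfies $f(\theta) = \|G(\theta)\|_{p_\theta}$, and H\"older's inequality for Schatten norms yields $|f(\ci t)| \leq \|G(\ci t)\|_{p_0}$ and $|f(1+\ci t)| \leq \|G(1+\ci t)\|_{p_1}$ on the two boundary components.

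With $f$ in hand, I would invoke the scalar Hirschman inequality on the strip: for any holomorphic $f$ on $S$ with admissible growth and continuous boundary values,
\begin{equation*}
\log|f(\theta)| \leq \int_{-\infty}^\infty \di t\, \Bigl( \beta_{1-\theta}(t)\, \log|f(\ci t)|^{1-\theta} + \beta_\theta(t)\, \log|f(1+\ci t)|^{\theta} \Bigr).
\end{equation*}
This is the Poisson integral representation of bounded harmonic functions on $S$, applied to the subharmonic function $\log|f|$ (with a $\log(|f|+\varepsilon)$ regularisation to absorb zeros of $f$, followed by monotone convergence as $\varepsilon \to 0$). The two densities $(1-\theta)\beta_{1-\theta}(t)$ and $\theta\beta_\theta(t)$ are exactly the Poisson kernels of the strip at the interior point $\theta$ restricted to the boundary components $\{\operatorname{Re}(z)=0\}$ and $\{\operatorname{Re}(z)=1\}$; one can verify this identification via an explicit conformal map of $S$ to the disc. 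Combining the scalar inequality with the boundary bounds above yields the claim.

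The main obstacles I expect are the growth hypothesis and the possible singularity of $Y$. For the growth of $f$ one writes $|f(z)| \leq \|G(z)\|_{p_{\operatorname{Re}(z)}} \|Y(z)\|_{p_{\operatorname{Re}(z)}'}$ and observes that $Y(z) = U|Y|^{\operatorname{Re}(\alpha(z))} |Y|^{\ci\operatorname{Im}(\alpha(z))}$, where the second factor acts as a unitary on $\operatorname{supp}|Y|$. Hence $\log\|Y(z)\|_{p_{\operatorname{Re}(z)}'}$ is uniformly bounded on $S$, reducing the growth of $\log|f|$ to the hypothesised sub-$\pi$-exponential growth of $\log\|G(z)\|_{p_{\operatorname{Re}(z)}}$\,---\,which is exactly the hypothesis required to apply the scalar Hirschman inequality (and the reason the restriction $a<\pi$ in the footnote cannot be relaxed). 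If $|Y|$ is singular, one adopts the convention $0^z = 0$ on $\ker|Y|$ so that $|Y|^{\alpha(z)}$ is defined on $\operatorname{supp}|Y|$ and zero elsewhere, or equivalently replaces $|Y|$ by $|Y|+\varepsilon$ and passes to the limit $\varepsilon \to 0$ at the very end.
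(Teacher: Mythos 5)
Your argument is correct and is essentially the same as the paper's (Appendix~C): the paper also reduces to scalar Hirschman by building a dual interpolating family with unit Schatten norm on the boundary\,---\,it uses the polar decomposition $G(\theta)=U\Delta$ and sets $X(z)^{\dagger}=C^{-p_\theta(\frac{1-z}{q_0}+\frac{z}{q_1})}\Delta^{p_\theta(\frac{1-z}{q_0}+\frac{z}{q_1})}U^{\dagger}$, which is the same family you obtain by taking the H\"older dual element $Y$ and interpolating $U|Y|^{\alpha(z)}$, since the optimal $Y$ is proportional to $U\Delta^{p_\theta-1}$. It then forms $g(z)=\tr\,X(z)^{\dagger}G(z)$, applies H\"older, and invokes scalar Hirschman, exactly as you do.
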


For the sake of completeness a proof is given in Appendix~\ref{app_Hirschman}. We note that for any $\theta\in(0,1)$ the function $\beta_{\theta}$ is non-negative and
$\int_{-\infty}^\infty \di t\ \beta_{\theta}(t)=1$ 
so that $\beta_{\theta}$ can be interpreted as probability density function on $\mathbb{R}$. These distributions are depicted in Figure~\ref{fig_betaT}.
Furthermore, the following limits hold:
\begin{equation}\label{eq:dist-limit}
\lim_{\theta \searrow 0}\beta_{\theta}(t)=\frac{\pi}{2}\bigl(  \cosh(\pi
t)+1\bigr)^{-1} =:\beta_{0}(t)\ , \qquad \textrm{and} \qquad
\lim_{\theta\nearrow 1}\beta_{\theta}(t)=\delta(t) =:\beta_{1}(t)\,.
\end{equation}
Here $\beta_{0}$ is another probability density function on $\R$ and $\delta(t)$ denotes the Dirac $\delta$-distribution. 

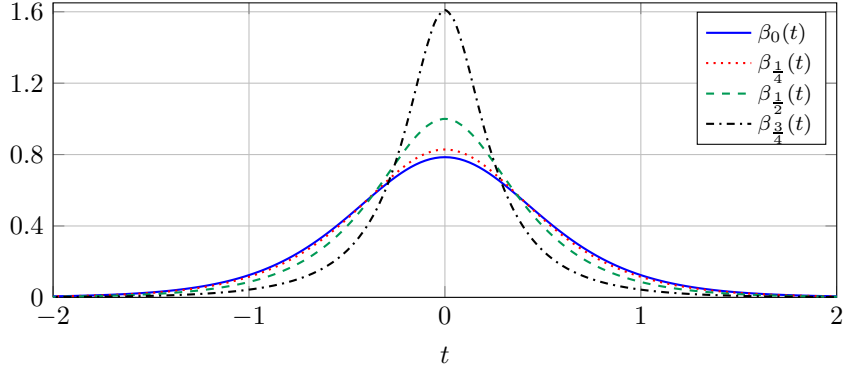
\begin{figure}[!htb]
\centering
  \begin{tikzpicture}
	\begin{axis}[
		height=5.5cm,
		width=12cm,
		grid=major,
		xlabel=$t$,
		xmin=-2,
		xmax=2,
		ymax=1.65,
		ymin=0,
	     xtick={-2,-1,0,1,2},
          ytick={1.6,1.2,0.8,0.4,0},
		legend style={at={(0.904,0.97)},anchor=north,legend cell align=left,font=\footnotesize} 
	]



	\addplot[blue,thick,smooth] coordinates {
(-3.,0.000253484) (-2.95,0.000296591) (-2.9,0.000347026) (-2.85,0.000406036) (-2.8,0.000475078) (-2.75,0.000555855) (-2.7,0.00065036) (-2.65,0.000760925) (-2.6,0.000890277) (-2.55,0.0010416) (-2.5,0.00121863) (-2.45,0.00142572) (-2.4,0.00166796) (-2.35,0.00195131) (-2.3,0.00228272) (-2.25,0.00267032) (-2.2,0.00312361) (-2.15,0.00365367) (-2.1,0.00427342) (-2.05,0.00499796) (-2.,0.00584489) (-1.95,0.00683471) (-1.9,0.00799129) (-1.85,0.00934241) (-1.8,0.0109204) (-1.75,0.0127627) (-1.7,0.0149127) (-1.65,0.0174209) (-1.6,0.0203454) (-1.55,0.0237533) (-1.5,0.0277215) (-1.45,0.0323386) (-1.4,0.0377056) (-1.35,0.0439371) (-1.3,0.051163) (-1.25,0.0595295) (-1.2,0.0691992) (-1.15,0.0803521) (-1.1,0.0931845) (-1.05,0.107908) (-1.,0.124746) (-0.95,0.143929) (-0.9,0.165685) (-0.85,0.190235) (-0.8,0.217769) (-0.75,0.248436) (-0.7,0.282316) (-0.65,0.319396) (-0.6,0.359536) (-0.55,0.402441) (-0.5,0.447625) (-0.45,0.494393) (-0.4,0.541823) (-0.35,0.58877) (-0.3,0.633898) (-0.25,0.675725) (-0.2,0.712712) (-0.15,0.74336) (-0.1,0.766334) (-0.05,0.780573) (0.,0.785398) (0.05,0.780573) (0.1,0.766334) (0.15,0.74336) (0.2,0.712712) (0.25,0.675725) (0.3,0.633898) (0.35,0.58877) (0.4,0.541823) (0.45,0.494393) (0.5,0.447625) (0.55,0.402441) (0.6,0.359536) (0.65,0.319396) (0.7,0.282316) (0.75,0.248436) (0.8,0.217769) (0.85,0.190235) (0.9,0.165685) (0.95,0.143929) (1.,0.124746) (1.05,0.107908) (1.1,0.0931845) (1.15,0.0803521) (1.2,0.0691992) (1.25,0.0595295) (1.3,0.051163) (1.35,0.0439371) (1.4,0.0377056) (1.45,0.0323386) (1.5,0.0277215) (1.55,0.0237533) (1.6,0.0203454) (1.65,0.0174209) (1.7,0.0149127) (1.75,0.0127627) (1.8,0.0109204) (1.85,0.00934241) (1.9,0.00799129) (1.95,0.00683471) (2.,0.00584489) (2.05,0.00499796) (2.1,0.00427342) (2.15,0.00365367) (2.2,0.00312361) (2.25,0.00267032) (2.3,0.00228272) (2.35,0.00195131) (2.4,0.00166796) (2.45,0.00142572) (2.5,0.00121863) (2.55,0.0010416) (2.6,0.000890277) (2.65,0.000760925) (2.7,0.00065036) (2.75,0.000555855) (2.8,0.000475078) (2.85,0.000406036) (2.9,0.000347026) (2.95,0.000296591) (3.,0.000253484)
	};
	\addlegendentry{$\beta_0(t)$}
	
	\addplot[red,thick,smooth,dotted] coordinates {
(-3.,0.000228227) (-2.95,0.00026704) (-2.9,0.000312454) (-2.85,0.000365589) (-2.8,0.000427758) (-2.75,0.000500497) (-2.7,0.000585601) (-2.65,0.000685171) (-2.6,0.000801664) (-2.55,0.000937954) (-2.5,0.0010974) (-2.45,0.00128394) (-2.4,0.00150216) (-2.35,0.00175743) (-2.3,0.00205604) (-2.25,0.00240533) (-2.2,0.00281388) (-2.15,0.0032917) (-2.1,0.0038505) (-2.05,0.00450395) (-2.,0.00526799) (-1.95,0.00616125) (-1.9,0.00720543) (-1.85,0.0084258) (-1.8,0.00985185) (-1.75,0.0115178) (-1.7,0.0134636) (-1.65,0.0157355) (-1.6,0.0183871) (-1.55,0.0214806) (-1.5,0.0250878) (-1.45,0.0292916) (-1.4,0.0341873) (-1.35,0.039884) (-1.3,0.0465066) (-1.25,0.054197) (-1.2,0.0631155) (-1.15,0.0734426) (-1.1,0.0853791) (-1.05,0.0991463) (-1.,0.114985) (-0.95,0.133155) (-0.9,0.153925) (-0.85,0.17757) (-0.8,0.204359) (-0.75,0.234535) (-0.7,0.268297) (-0.65,0.305767) (-0.6,0.346956) (-0.55,0.391718) (-0.5,0.439704) (-0.45,0.490309) (-0.4,0.542633) (-0.35,0.59545) (-0.3,0.647207) (-0.25,0.696069) (-0.2,0.740008) (-0.15,0.776951) (-0.1,0.804966) (-0.05,0.822471) (0.,0.828427) (0.05,0.822471) (0.1,0.804966) (0.15,0.776951) (0.2,0.740008) (0.25,0.696069) (0.3,0.647207) (0.35,0.59545) (0.4,0.542633) (0.45,0.490309) (0.5,0.439704) (0.55,0.391718) (0.6,0.346956) (0.65,0.305767) (0.7,0.268297) (0.75,0.234535) (0.8,0.204359) (0.85,0.17757) (0.9,0.153925) (0.95,0.133155) (1.,0.114985) (1.05,0.0991463) (1.1,0.0853791) (1.15,0.0734426) (1.2,0.0631155) (1.25,0.054197) (1.3,0.0465066) (1.35,0.039884) (1.4,0.0341873) (1.45,0.0292916) (1.5,0.0250878) (1.55,0.0214806) (1.6,0.0183871) (1.65,0.0157355) (1.7,0.0134636) (1.75,0.0115178) (1.8,0.00985185) (1.85,0.0084258) (1.9,0.00720543) (1.95,0.00616125) (2.,0.00526799) (2.05,0.00450395) (2.1,0.0038505) (2.15,0.0032917) (2.2,0.00281388) (2.25,0.00240533) (2.3,0.00205604) (2.35,0.00175743) (2.4,0.00150216) (2.45,0.00128394) (2.5,0.0010974) (2.55,0.000937954) (2.6,0.000801664) (2.65,0.000685171) (2.7,0.000585601) (2.75,0.000500497) (2.8,0.000427758) (2.85,0.000365589) (2.9,0.000312454) (2.95,0.00026704) (3.,0.000228227)
	};
	\addlegendentry{$\beta_{\frac{1}{4}}(t)$}

	\addplot[ForestGreen,thick,smooth,dashed] coordinates {
(-3.,0.000161399) (-2.95,0.000188851) (-2.9,0.000220973) (-2.85,0.000258558) (-2.8,0.000302535) (-2.75,0.000353993) (-2.7,0.000414204) (-2.65,0.000484655) (-2.6,0.000567089) (-2.55,0.000663545) (-2.5,0.000776406) (-2.45,0.000908464) (-2.4,0.00106298) (-2.35,0.00124379) (-2.3,0.00145534) (-2.25,0.00170288) (-2.2,0.00199251) (-2.15,0.00233142) (-2.1,0.00272797) (-2.05,0.00319196) (-2.,0.00373487) (-1.95,0.00437013) (-1.9,0.00511343) (-1.85,0.00598315) (-1.8,0.00700079) (-1.75,0.00819151) (-1.7,0.00958474) (-1.65,0.0112149) (-1.6,0.0131223) (-1.55,0.015354) (-1.5,0.0179651) (-1.45,0.0210202) (-1.4,0.0245945) (-1.35,0.0287761) (-1.3,0.033668) (-1.25,0.0393905) (-1.2,0.0460837) (-1.15,0.0539115) (-1.1,0.0630643) (-1.05,0.0737637) (-1.,0.0862667) (-0.95,0.10087) (-0.9,0.117916) (-0.85,0.137795) (-0.8,0.160949) (-0.75,0.187873) (-0.7,0.219108) (-0.65,0.255231) (-0.6,0.296828) (-0.55,0.344451) (-0.5,0.398537) (-0.45,0.459301) (-0.4,0.526566) (-0.35,0.599546) (-0.3,0.676591) (-0.25,0.75494) (-0.2,0.830584) (-0.15,0.898389) (-0.1,0.952603) (-0.05,0.987789) (0.,1.) (0.05,0.987789) (0.1,0.952603) (0.15,0.898389) (0.2,0.830584) (0.25,0.75494) (0.3,0.676591) (0.35,0.599546) (0.4,0.526566) (0.45,0.459301) (0.5,0.398537) (0.55,0.344451) (0.6,0.296828) (0.65,0.255231) (0.7,0.219108) (0.75,0.187873) (0.8,0.160949) (0.85,0.137795) (0.9,0.117916) (0.95,0.10087) (1.,0.0862667) (1.05,0.0737637) (1.1,0.0630643) (1.15,0.0539115) (1.2,0.0460837) (1.25,0.0393905) (1.3,0.033668) (1.35,0.0287761) (1.4,0.0245945) (1.45,0.0210202) (1.5,0.0179651) (1.55,0.015354) (1.6,0.0131223) (1.65,0.0112149) (1.7,0.00958474) (1.75,0.00819151) (1.8,0.00700079) (1.85,0.00598315) (1.9,0.00511343) (1.95,0.00437013) (2.,0.00373487) (2.05,0.00319196) (2.1,0.00272797) (2.15,0.00233142) (2.2,0.00199251) (2.25,0.00170288) (2.3,0.00145534) (2.35,0.00124379) (2.4,0.00106298) (2.45,0.000908464) (2.5,0.000776406) (2.55,0.000663545) (2.6,0.000567089) (2.65,0.000484655) (2.7,0.000414204) (2.75,0.000353993) (2.8,0.000302535) (2.85,0.000258558) (2.9,0.000220973) (2.95,0.000188851) (3.,0.000161399)
	};
	\addlegendentry{$\beta_{\frac{1}{2}}(t)$}

	\addplot[black,thick,smooth,dashdotted] coordinates {
(-3.,0.0000760929) (-2.95,0.0000890372) (-2.9,0.000104184) (-2.85,0.000121908) (-2.8,0.000142647) (-2.75,0.000166916) (-2.7,0.000195315) (-2.65,0.000228547) (-2.6,0.000267436) (-2.55,0.000312945) (-2.5,0.000366202) (-2.45,0.000428529) (-2.4,0.000501472) (-2.35,0.000586842) (-2.3,0.00068676) (-2.25,0.000803711) (-2.2,0.000940606) (-2.15,0.00110086) (-2.1,0.00128846) (-2.05,0.00150811) (-2.,0.0017653) (-1.95,0.00206648) (-1.9,0.00241924) (-1.85,0.00283247) (-1.8,0.00331662) (-1.75,0.00388401) (-1.7,0.00454912) (-1.65,0.00532901) (-1.6,0.00624384) (-1.55,0.00731738) (-1.5,0.00857781) (-1.45,0.0100585) (-1.4,0.0117991) (-1.35,0.0138469) (-1.3,0.0162583) (-1.25,0.0191009) (-1.2,0.0224558) (-1.15,0.0264213) (-1.1,0.0311164) (-1.05,0.0366861) (-1.,0.0433083) (-0.95,0.0512028) (-0.9,0.0606426) (-0.85,0.0719696) (-0.8,0.0856159) (-0.75,0.102132) (-0.7,0.122225) (-0.65,0.146813) (-0.6,0.177097) (-0.55,0.214659) (-0.5,0.26159) (-0.45,0.320658) (-0.4,0.395477) (-0.35,0.490627) (-0.3,0.611507) (-0.25,0.763405) (-0.2,0.948757) (-0.15,1.16111) (-0.1,1.37577) (-0.05,1.54429) (0.,1.60948) (0.05,1.54429) (0.1,1.37577) (0.15,1.16111) (0.2,0.948757) (0.25,0.763405) (0.3,0.611507) (0.35,0.490627) (0.4,0.395477) (0.45,0.320658) (0.5,0.26159) (0.55,0.214659) (0.6,0.177097) (0.65,0.146813) (0.7,0.122225) (0.75,0.102132) (0.8,0.0856159) (0.85,0.0719696) (0.9,0.0606426) (0.95,0.0512028) (1.,0.0433083) (1.05,0.0366861) (1.1,0.0311164) (1.15,0.0264213) (1.2,0.0224558) (1.25,0.0191009) (1.3,0.0162583) (1.35,0.0138469) (1.4,0.0117991) (1.45,0.0100585) (1.5,0.00857781) (1.55,0.00731738) (1.6,0.00624384) (1.65,0.00532901) (1.7,0.00454912) (1.75,0.00388401) (1.8,0.00331662) (1.85,0.00283247) (1.9,0.00241924) (1.95,0.00206648) (2.,0.0017653) (2.05,0.00150811) (2.1,0.00128846) (2.15,0.00110086) (2.2,0.000940606) (2.25,0.000803711) (2.3,0.00068676) (2.35,0.000586842) (2.4,0.000501472) (2.45,0.000428529) (2.5,0.000366202) (2.55,0.000312945) (2.6,0.000267436) (2.65,0.000228547) (2.7,0.000195315) (2.75,0.000166916) (2.8,0.000142647) (2.85,0.000121908) (2.9,0.000104184) (2.95,0.0000890372) (3.,0.0000760929)
	};
	\addlegendentry{$\beta_{\frac{3}{4}}(t)$}


	\end{axis}  

\end{tikzpicture}
\caption{This plot depicts the probability densities $\beta_\theta$ defined in~\eqref{eq_densi} for $\theta \in \{0,\frac{1}{4},\frac{1}{2},\frac{3}{4}\}$.}
\label{fig_betaT}
\end{figure}



\subsection{Main results and proofs via interpolation theory} \label{sec_main}
In this section we prove our main results which are extensions of the ALT and the GT inequality to arbitrarily many matrices.

\begin{theorem} \label{thm_ALT_Hirschman}
Let $p\geq 1$, $r \in (0, 1]$, $\beta_r$ as defined in~\eqref{eq_densi}, $n\in \N$, and consider a collection $\{A_k\}_{k=1}^n$ of positive semi-definite matrices. Then
\begin{align}\label{eq_ALT_new1}
&\log \norm{ \left| \prod_{k=1}^n A_k^r \right|^{\frac{1}{r}} }_p \leq \int_{-\infty}^{\infty} \di t \, \beta_{r}(t)\, \log \norm{ \prod_{k=1}^n A_k^{1 + \ci  t} }_p \,.
\end{align}
\end{theorem}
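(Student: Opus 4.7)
The natural approach is to invoke the Stein-Hirschman interpolation bound (Theorem~\ref{thm_hirschman}) applied to the operator-valued function
\[
G(z) := \prod_{k=1}^n A_k^{z}, \qquad z \in S = \{z \in \C : 0 \leq \operatorname{Re}(z) \leq 1\},
\]
with the parameter choices $p_0 = \infty$, $p_1 = p$, and $\theta = r$. Since $\frac{1}{p_\theta} = \frac{1-r}{\infty} + \frac{r}{p} = \frac{r}{p}$, this gives $p_\theta = p/r \geq 1$ (using $p\geq 1 \geq r$). The interior point then produces $\|G(r)\|_{p/r}$, which is precisely $\|\,|\prod_k A_k^r|^{1/r}\,\|_p^{r}$ by the elementary identity $\|\,|X|^{1/r}\,\|_p = \|X\|_{p/r}^{1/r}$; taking logarithms converts the Stein-Hirschman inequality into the desired bound after dividing by $r$.

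First, I would verify the hypotheses. Assuming without loss of generality that all $A_k$ are positive definite (the semi-definite case then follows by continuity via the convention $0^z = 0$), each $A_k^z = \exp(z\log A_k)$ is entire, so $G$ is entire. For the uniform boundedness condition, on the strip $S$ the factor $A_k^z$ satisfies $\|A_k^z\|_\infty \leq \max(1,\|A_k\|_\infty)$ because $A_k^{\operatorname{Re}(z)}$ is a bounded positive operator and $A_k^{\ci \operatorname{Im}(z)}$ is unitary. Hence $\|G(z)\|_\infty$ is uniformly bounded on $S$, and since we work in finite dimension, so is every Schatten norm $\|G(z)\|_{p_{\operatorname{Re}(z)}}$.

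Next, I would evaluate both boundary contributions. On the left boundary $z = \ci t$, each factor $A_k^{\ci t}$ is unitary, so
\[
\|G(\ci t)\|_\infty = \Bigl\|\prod_{k=1}^n A_k^{\ci t}\Bigr\|_\infty \leq 1,
\]
which means $\log \|G(\ci t)\|_\infty^{\,1-r} \leq 0$ and this boundary term can be discarded to yield an upper bound. On the right boundary $z = 1 + \ci t$, the norm $\|G(1+\ci t)\|_p$ is exactly the quantity $\|\prod_k A_k^{1+\ci t}\|_p$ appearing in the statement.

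Putting these pieces together, Theorem~\ref{thm_hirschman} yields
\[
\log \|G(r)\|_{p/r} \leq r \int_{-\infty}^\infty \di t \; \beta_r(t)\, \log \Bigl\|\prod_{k=1}^n A_k^{1+\ci t}\Bigr\|_p,
\]
and dividing by $r$ and exponentiating gives exactly~\eqref{eq_ALT_new1}. The only real subtlety I anticipate is the passage from positive definite to positive semi-definite matrices: one must check that both sides of~\eqref{eq_ALT_new1} are continuous under $A_k \to A_k + \eps\, \id$ as $\eps \searrow 0$, which follows from the $0^z=0$ convention together with the dominated convergence theorem applied to the integral on the right (using that $\beta_r$ is integrable and that each $\log \|\prod_k A_k^{1+\ci t}\|_p$ is uniformly bounded in $t$ since $\|A_k^{1+\ci t}\|_\infty = \|A_k\|_\infty$). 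All other steps are either standard interpolation bookkeeping or direct algebraic manipulations of Schatten norms.
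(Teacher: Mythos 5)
Your proposal is correct and follows the same route as the paper: the same choice $G(z)=\prod_k A_k^z$, the same parameters $\theta=r$, $p_0=\infty$, $p_1=p$, the same observation that the $z=\ci t$ boundary contribution vanishes because each $A_k^{\ci t}$ is unitary, and the same Schatten-norm identity $\|G(r)\|_{p/r}=\|\,|{\textstyle\prod_k A_k^r}|^{1/r}\,\|_p^{\,r}$. The only nit is that Theorem~\ref{thm_hirschman} requires $\theta\in(0,1)$, so the endpoint $r=1$ should be disposed of separately (it holds trivially with equality, as $\beta_1=\delta$); also there is no exponentiation in the final step, the inequality is already in logarithmic form after dividing by $r$.
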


\begin{proof}
The case $r = 1$ holds trivially with equality, so suppose $r\in (0,1)$. We prove the result for positive definite matrices and note that the generalization to positive semi-definite matrices follows by continuity. We define the function $G(z):=\prod_{k=1}^n A_k^z = \prod_{k=1}^n \exp(z \log A_k)$ which satisfies the regularity assumptions of Theorem~\ref{thm_hirschman}. Furthermore we pick $\theta = r$, $p_0 = \infty$ and $p_1 = p$ such that $p_{\theta} = \frac{p}{r}$. We find
\begin{align}
\log \norm{G(1+\ci t)}_{p_1}^\theta = r \log \norm{\prod_{k=1}^n A_k^{1+\ci t}}_{p}
\quad \textrm{and} \quad 
\log \norm{G(\ci t)}_{p_0}^{1-\theta} = (1- r) \log \norm{\prod_{k=1}^n A_k^{\ci t} }_{\infty} = 0 \ ,
\end{align}
since the matrices $A_k^{\ci t}$ are unitary.
Moreover, we have
\begin{align}
\log \norm{ G(\theta) }_{p_{\theta}} 
= \log \norm{ \prod_{k=1}^n A_k^{r} }_{\frac{p}{r}} 
= r \log \norm{ \left| \prod_{k=1}^n A_k^{r} \right|^{\frac{1}{r}} }_{p}  \ .
\end{align}
Plugging this into Theorem~\ref{thm_hirschman} yields the desired inequality.
\end{proof}

Let us now remark on several aspects of this inequality. First, we note that the substitution $p \leftarrow 2q$ and $A_k \leftarrow \sqrt{A_k}$ allows to rewrite~\eqref{eq_ALT_new1} in a more suggestive form. For $q \geq \frac12$ and $r \in (0,1]$, we have
\begin{multline} \label{eq_suggestiveForm}
\log \tr \left( A_1^{\frac{r}{2}} A_2^{\frac{r}{2}} \cdots A_{n-1}^{\frac{r}{2}} A_n^{r} A_{n-1}^{\frac{r}{2}} \cdots A_2^{\frac{r}{2}} A_1^{\frac{r}{2}} \right)^{\frac{q}{r}}  \\
 \leq \int_{-\infty}^{\infty} \di t \, \beta_{r}(t)\, \log \tr \left( A_1^{\frac{1}{2}} A_2^{\frac{1+\ci t}{2}} \cdots A_{n-1}^{\frac{1+\ci t}{2}} A_n A_{n-1}^{\frac{1-\ci t}{2}} \cdots A_2^{\frac{1-\ci t}{2}} A_1^{\frac{1}{2}} \right)^q .
\end{multline}
For $n = 2$ the term on the right-hand side is independent of $t$ and we recover the ALT inequality in~\eqref{eq_ALT_intro}. However, we only recover the result for $q \geq \frac12$ using complex interpolation theory. This can be fixed by proving a multivariate extension of the ALT inequality based on pinching (see Theorem~\ref{thm_ALT_pinching}).

Also note that we can always remove the logarithm in the above inequalities by using its concavity and Jensen's inequality. Moreover, for $q \in [\frac{1}{2},1]$ we may pull the integration inside the quasi-norm (by employing the fact that $X \mapsto \log \norm{X}_p$ is concave for $p \in [0,1]$), which yields the following relaxation
\begin{multline}
 \norm{ \left( A_1^{\frac{r}{2}} A_2^{\frac{r}{2}} \cdots A_{n-1}^{\frac{r}{2}} A_n^{r} A_{n-1}^{\frac{r}{2}} \cdots A_2^{\frac{r}{2}} A_1^{\frac{r}{2}} \right)^{\frac{1}{r}} 
}_q \\
 \leq \norm{ \int_{-\infty}^{\infty} \di t \, \beta_{r}(t)\, A_1^{\frac{1}{2}} A_2^{\frac{1+\ci t}{2}} \cdots A_{n-1}^{\frac{1+\ci t}{2}} A_n A_{n-1}^{\frac{1-\ci t}{2}} \cdots A_2^{\frac{1-\ci t}{2}} A_1^{\frac{1}{2}} }_q .
\end{multline}

Next, recall the multivariate Lie-Trotter product formula in~\eqref{eq_LieTrotter}.
Again, this allows us to derive an extension of the GT inequality to arbitrarily many matrices by taking the limit $r \to 0$ of~\eqref{eq_suggestiveForm}.

\begin{corollary} \label{cor_GT_steinHirschman}
Let $p\geq 1$, $\beta_0$ as defined in~\eqref{eq:dist-limit}, $n \in \N$ and consider a collection $\{ H_k\}_{k=1}^n$ of Hermitian matrices. Then
\begin{align} \label{eq_mainResGT}
  \log \norm{ \exp \left( \sum_{k=1}^n H_k \right) }_p \leq 
  \int_{-\infty}^{\infty} \di t \, \beta_0(t)\, \log \norm{ \prod_{k=1}^n  \exp\bigl( (1+\ci t) H_k \bigr) }_p \ \,.
\end{align}
\end{corollary}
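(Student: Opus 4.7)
The plan is to apply Theorem~\ref{thm_ALT_Hirschman} with the substitution $A_k \leftarrow \exp(H_k)$ (these are positive definite since the $H_k$ are Hermitian) and then pass to the limit $r \searrow 0$. With this choice the inequality reads
\begin{align}
\log \norm{ \left| \prod_{k=1}^n \exp(r H_k) \right|^{\frac{1}{r}} }_p \leq \int_{-\infty}^{\infty} \di t \, \beta_r(t)\, \log \norm{ \prod_{k=1}^n \exp\bigl((1+\ci t) H_k\bigr) }_p \,,
\end{align}
and the crucial observation is that the right-hand integrand is already $r$-independent, so all the $r$-dependence on that side sits in $\beta_r$.

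For the left-hand side, set $X_r := \prod_{k=1}^n \exp(r H_k)$ and note that $|X_r|^{1/r} = (X_r^\dagger X_r)^{1/(2r)}$. I would then apply the multivariate Lie-Trotter formula~\eqref{eq_LieTrotter} to the $2n$-tuple $(H_n,\ldots,H_1,H_1,\ldots,H_n)$, yielding $(X_r^\dagger X_r)^{1/r} \to \exp(2\sum_k H_k)$, and invoke continuity of the matrix square root on positive definite matrices to conclude $|X_r|^{1/r} \to \exp(\sum_k H_k)$. Continuity of the Schatten $p$-norm and of $\log$ on the positive reals then transports this convergence to the left-hand side.

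For the right-hand side, I would invoke dominated convergence. The pointwise limit $\beta_r(t) \to \beta_0(t)$ is supplied by~\eqref{eq:dist-limit}. For $r \in (0,\tfrac12]$, the elementary bounds $\sin(\pi r)/(2r) \leq \pi/2$ and $\cos(\pi r) \geq 0$ give the integrable envelope $\beta_r(t) \leq \pi/(2\cosh(\pi t))$. The integrand $\log \norm{\prod_k \exp((1+\ci t) H_k)}_p$ is bounded uniformly in $t$: an upper bound follows from submultiplicativity of $\norm{\cdot}_p$ combined with the unitarity of each $\exp(\ci t H_k)$, while a strictly positive lower bound on the norm (hence a lower bound on its logarithm) comes from applying the same argument to the inverse product $\prod_{k=n}^{1} \exp(-(1+\ci t) H_k)$.

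The main obstacle is the dominated-convergence step — specifically, producing the joint $r$- and $t$-uniform envelopes just mentioned. Once these are in hand, interchanging the limit with the integral is routine, and combining the two limits yields~\eqref{eq_mainResGT}.
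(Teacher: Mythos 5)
Your proposal is correct and follows essentially the same route the paper intends: substitute $A_k = \exp(H_k)$ into Theorem~\ref{thm_ALT_Hirschman} and pass to the limit $r\searrow 0$ via the multivariate Lie--Trotter formula~\eqref{eq_LieTrotter}. The paper leaves the limiting step implicit, while you correctly fill in the two pieces that make it rigorous — applying Lie--Trotter to the $2n$-tuple $(H_n,\dots,H_1,H_1,\dots,H_n)$ for $(X_r^\dagger X_r)^{1/r}$ and then taking a square root, and a dominated convergence argument for the $\beta_r\to\beta_0$ limit with the envelope $\beta_r(t)\le \pi/(2\cosh(\pi t))$ for $r\in(0,\tfrac12]$ together with the $t$-uniform bound $\bigl|\log\|\prod_k\exp((1+\ci t)H_k)\|_p\bigr|\le \sum_k\bigl(\log\|\exp(H_k)\|_p+\log\|\exp(-H_k)\|_p\bigr)$ obtained from submultiplicativity, unitary invariance, and the singular-value bound $\|Y\|_p\ge 1/\|Y^{-1}\|_p$.
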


Let us take a closer look at the case $n = 3$ and $p = 2$. Substituting $H_k \leftarrow \frac12 H_k$ and using the concavity of the logarithm and Jensen's inequality, we relax Corollary~\ref{cor_GT_steinHirschman} to 
\begin{align} \label{eq_ourGT3}
  \tr \exp \left( H_1 + H_2 + H_3 \right) \leq 
  \int_{-\infty}^{\infty} \di t \,  \beta_0(t)\, \tr\exp(H_1) \exp\left(\tfrac{1+\ci t}{2}H_2\right) \exp(H_3) \exp\left(\tfrac{1-\ci t}{2}H_2\right) .
\end{align} 
This is to be contrasted with Lieb's triple matrix inequality~\cite{Lieb73}, which asserts that
\begin{align} \label{eq_lieb3}
\tr \exp(H_1 + H_2 + H_3)  \leq \int_{0}^{\infty} \!\di \lambda \, \tr  \exp(H_1) \big(\exp(-H_2)+\lambda\,\id \big)^{-1} \exp(H_3) \big(\exp(-H_2)+\lambda\,\id\big)^{-1} \ .
\end{align}
As the next lemma shows, it turns out that these two expressions are in fact equivalent. We believe that this result might be of independent interest as it allows us to write the Fr\'echet derivate of the operator logarithm using an integration over rotations. 

\begin{lemma}\label{lem_LiebRep}
  The following two expressions for the Fr\'echet derivative of the logarithm are equivalent. For any positive definite operator $A$ and Hermitian operator $H$ on a separable Hilbert space, we have
  \begin{align} \label{eq_derivative}
    D\log(A)[H] = \int_{0}^{\infty} \!\di \lambda \,(A+\lambda\,\id)^{-1} H (A+\lambda\,\id)^{-1} = \int_{-\infty}^{\infty} \di t \, \beta_0(t)\, A^{-\frac{1}{2} - \frac{\ci t}{2}} H A^{-\frac{1}{2}+\frac{\ci t}{2}} \,.
  \end{align}
\end{lemma}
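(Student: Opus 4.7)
The plan is to diagonalize $A$ so that both integrals act as entrywise linear maps on matrix elements of $H$, whose multipliers depend only on the eigenvalues $a_j, a_k$ of $A$. This reduces the operator identity to a scalar identity parameterized by two positive reals, which in turn reduces to a single Fourier transform computation.

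For the first equality, I would start from the well-known integral representation $\log a = \int_0^\infty \bigl((1+\lambda)^{-1} - (a+\lambda)^{-1}\bigr)\di\lambda$, which lifts to operators by functional calculus. Differentiating along $H$ and using $D(X^{-1})[H] = -X^{-1} H X^{-1}$ immediately yields $D\log(A)[H] = \int_0^\infty (A+\lambda\,\id)^{-1} H (A+\lambda\,\id)^{-1}\,\di\lambda$. This step is entirely standard.

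For the second equality, write $A = \sum_j a_j \proj{j}$. Both expressions act entrywise on $H_{jk} = \bra{j}H\ket{k}$, so the claim is equivalent to the scalar identity
\begin{equation*}
\int_0^\infty \frac{\di\lambda}{(a_j+\lambda)(a_k+\lambda)} \;=\; \int_{-\infty}^\infty \beta_0(t)\, a_j^{-\frac12-\frac{\ci t}{2}}\, a_k^{-\frac12+\frac{\ci t}{2}}\,\di t
\end{equation*}
for all $a_j, a_k > 0$. The left-hand side evaluates by partial fractions to $(\log a_j - \log a_k)/(a_j-a_k)$, with value $1/a_j$ when $a_j=a_k$. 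Introducing $u = \log a_j$, $v = \log a_k$, and $y = (u-v)/2$, factoring out $e^{-(u+v)/2} = (a_j a_k)^{-1/2}$, and using that $\beta_0$ is an even function, the claim collapses to the single-variable Fourier identity
\begin{equation*}
\int_{-\infty}^\infty \beta_0(t)\, \cos(ty)\,\di t \;=\; \frac{y}{\sinh y}.
\end{equation*}

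The only computational step is verifying this Fourier identity, which is the main obstacle. Using the half-angle identity $\cosh(\pi t)+1 = 2\cosh^2(\pi t/2)$, one rewrites $\beta_0(t) = \tfrac{\pi}{4}\,\mathrm{sech}^2(\pi t/2)$. The classical transform $\int_{-\infty}^\infty \mathrm{sech}^2(a x)\, e^{-\ci \xi x}\,\di x = \tfrac{\pi \xi}{a^2 \sinh(\pi \xi/(2a))}$, applied with $a = \pi/2$ and $\xi = y$, gives exactly $y/\sinh y$ after simplification. Alternatively, one can derive this by contour integration, closing in the upper half-plane and summing residues at the double poles $t = \ci(2k+1)$, $k \in \N_0$, of $\beta_0$; the resulting alternating series telescopes to $y/\sinh y$. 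Taking the $y \to 0$ limit recovers the degenerate case $a_j = a_k$, completing the proof.
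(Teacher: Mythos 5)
Your proof is correct and follows essentially the same route as the paper's: diagonalize $A$, reduce to a scalar identity in two positive reals, use evenness of $\beta_0$ to pass to a cosine transform, and verify the resulting Fourier identity. The only difference is that you supply the explicit Fourier computation (via $\beta_0(t) = \frac{\pi}{4}\,\mathrm{sech}^2(\pi t/2)$ and the classical $\mathrm{sech}^2$ transform, or alternatively by residues), whereas the paper simply asserts the value of the integral $\int_{-\infty}^\infty \beta_0(t)\cos\!\bigl(\tfrac{t}{2}\log(y/x)\bigr)\di t$ without showing the calculation.
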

The proof is given in Appendix~\ref{app_Derivative}.
The above lemma also gives a further means to understand the probability distribution $\beta_0$ which we obtained from Hirschman's interpolation theorem.\footnote{In Appendix~\ref{app_beta0} we also give numerical evidence that the exact form of the distribution $\beta_0$ is crucial. Our results indicate that if the distribution is more narrow or more flat then Corollary~\ref{cor_GT_steinHirschman} is no longer valid.} 
Whereas Lieb's triple matrix inequality in~\eqref{eq_lieb3} has not been extended to more than three matrices, the alternative representation obtained in~\eqref{eq_ourGT3} trough Corollary~\ref{cor_GT_steinHirschman} naturally extends to arbitrarily many matrices. Finally, it should be noted that Lieb's triple matrix inequality has been shown to be equivalent to many other interesting statements (such as Lieb's concavity theorem~\cite{Lieb73}), and hence it is valuable to have an entirely different proof of these results.


Corollary~\ref{cor_GT_steinHirschman} is valid for Hermitian matrices, but we can extend its scope to general square matrices using the same techniques.

\begin{theorem} \label{thm_GT_general}
Let $p\geq 1$, $\beta_0$ as defined in~\eqref{eq:dist-limit}, $n \in \N$ and consider a collection $\{ L_k\}_{k=1}^n$ of square matrices. Define $\Re(L_k) := \frac12 (L_k + L_k^\dag)$. Then
\begin{align}
  \log \norm{ \exp \left( \sum_{k=1}^n L_k \right) }_p \leq 
  \int_{-\infty}^{\infty} \di t \, \beta_0(t)\, \log \norm{ \prod_{k=1}^n  \exp\bigl( (1+\ci t) \Re(L_k) \bigr) }_p .
\end{align}
\end{theorem}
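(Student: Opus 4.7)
I plan to deduce Theorem~\ref{thm_GT_general} from its Hermitian counterpart Corollary~\ref{cor_GT_steinHirschman}, applied to the Hermitian parts $M_k := \Re(L_k) = \tfrac{1}{2}(L_k + L_k^\dag)$. Since $\sum_k M_k = \Re\bigl(\sum_k L_k\bigr)$, and the right-hand side $\int \di t\, \beta_0(t)\, \log \norm{\prod_k \exp((1+\ci t) M_k)}_p$ produced by Corollary~\ref{cor_GT_steinHirschman} is identical to that of Theorem~\ref{thm_GT_general}, the entire task reduces to establishing the one-matrix auxiliary bound
\[
\norm{\exp(A)}_p \leq \norm{\exp(\Re(A))}_p \qquad \text{for every square matrix } A \text{ and every } p \geq 1,
\]
and then applying it with $A = \sum_k L_k$.

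The substantive step is this auxiliary inequality. I would derive it from the weak log-majorization of singular values $\sigma(e^A) \prec_{w\log} \sigma(e^{\Re(A)})$, combining three standard ingredients: Horn's multiplicative singular-value bound $\prod_{i=1}^k \sigma_i(X_1 \cdots X_n) \leq \prod_{i=1}^k \prod_{j=1}^n \sigma_i(X_j)$; the Lie product formula $e^A = \lim_{n\to\infty}(e^{A/n})^n$; and the first-order expansion $\sigma_i(e^{A/n}) = 1 + \tfrac{1}{n}\lambda_i^\downarrow(\Re(A)) + O(n^{-2})$, itself a consequence of Hermitian perturbation theory applied to $e^{A^\dag/n} e^{A/n} = \mathrm{I} + \tfrac{2}{n}\Re(A) + O(n^{-2})$. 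Together these yield $\sigma_i(e^{A/n})^n \to e^{\lambda_i^\downarrow(\Re(A))} = \sigma_i(e^{\Re(A)})$, and Horn's bound applied to $e^A = (e^{A/n})^n$ gives $\prod_{i=1}^k \sigma_i(e^A) \leq \prod_{i=1}^k \sigma_i(e^{\Re(A)})$. Since $x \mapsto x^p$ composed with $\exp$ is convex and increasing for $p \geq 1$, weak log-majorization upgrades to weak majorization, and weak majorization delivers the $p$-norm bound.

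The main obstacle is precisely this auxiliary inequality. One might hope to avoid it by applying the Stein-Hirschman Theorem~\ref{thm_hirschman} directly, mimicking the proof of Theorem~\ref{thm_ALT_Hirschman} with $G(z) = \prod_k e^{z L_k}$, but the natural attempt fails: for non-Hermitian $L_k$ the boundary operators $e^{\ci t L_k}$ are no longer unitary, so the term $\log \norm{G(\ci t)}_\infty$ on the right-hand side of Theorem~\ref{thm_hirschman} does not vanish and one loses the clean expression in terms of $\Re(L_k)$. A related attempt, $G(z) = e^{zA + (1-z)A^\dag}$, leads to $G(\ci t)$ being the exponential of a non-Hermitian operator without a good uniform norm bound. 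Routing through log-majorization of singular values therefore appears to be the cleanest path, and it slots in directly before Corollary~\ref{cor_GT_steinHirschman} to yield the theorem.
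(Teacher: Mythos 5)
Your reduction is correct, and the auxiliary one-matrix bound you invoke is indeed a standard fact: that $\sigma(e^A)$ is weakly log-majorized by $\sigma(e^{\Re(A)})$ follows exactly by the Horn/Lie-Trotter argument you sketch, and equality of determinants ($|\det e^A|=e^{\Re(\tr A)}=\det e^{\Re(A)}$) even upgrades it to log-majorization. Combined with Corollary~\ref{cor_GT_steinHirschman} applied to the $\Re(L_k)$, this gives the theorem. However, your route is genuinely different from the paper's. You correctly identify that the naive choice $G(z)=\prod_k e^{zL_k}$ fails because $G(\ci t)$ is not unitary, but the paper's way out is not log-majorization: it fixes a small parameter $\theta=r$ and interpolates the function
\begin{align}
G(z) := \prod_{k=1}^n \exp\bigl(z\,\Re(L_k) + \ci\,\theta\,\Im(L_k)\bigr),
\end{align}
keeping the anti-Hermitian part frozen at scale $\theta$. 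On the line $\operatorname{Re}(z)=0$ the exponent is anti-Hermitian, so $G(\ci t)$ is unitary and the boundary term in Stein--Hirschman vanishes; after dividing by $r$ and letting $r\to 0$ via Lie--Trotter, the $r\Im(L_k)$ disappears and the theorem follows. The trade-offs: your approach is more modular, routing through a classical majorization lemma and re-using the Hermitian Corollary~\ref{cor_GT_steinHirschman} as a black box; the paper's direct interpolation additionally produces the intermediate inequality~\eqref{eq_ALTgeneral}, an Araki--Lieb--Thirring-type bound for general square matrices at finite $r$, which your argument does not yield. Both proofs are sound; the paper's is the stronger in that it gives the ALT analogue for free, while yours is the more elementary in that it avoids re-invoking complex interpolation.
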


\begin{proof}
  We write $L_k = \Re(L_k) + \ci \Im(L_k)$ where $\Im(L_k) = \frac1{2\ci}(L_k - L_k^{\dag})$, and note that both $\Re(L_k)$ and $\Im(L_k)$ are Hermitian. Now define 
\begin{align}
  G(z) := \prod_{k=1}^n \exp\big(z \Re(L_k) + \ci \theta \Im(L_k)\big), 
\end{align}
which satisfies the regularity assumption of Theorem~\ref{thm_hirschman}. We note that $G(\ci t)$ is unitary, and thus $\log \norm{G(\ci t)}_{\infty}$ vanishes. We again pick $\theta = r \in (0,1)$, $p_0 = \infty$ and $p_1 = p$ such that $p_{\theta} = \frac{p}{r}$, and find
\begin{align}
  r \log \norm{ \left| \exp \left( r \sum_{k=1}^n L_k \right) \right|^{\frac{1}{r}} }_p &= \log \| G(\theta) \|_{p_{\theta}} \\
  &\leq \int_{-\infty}^{\infty} \di t \, \beta_r(t)\, \log \norm{ G(1 + \ci t) }_p^r \\
  &= r \int_{-\infty}^{\infty} \di t \, \beta_r(t)\, \log \norm{ \prod_{k=1}^n \exp\big((1+\ci t)\Re(L_k) + r \Im(L_k) \big) }_p . \label{eq_ALTgeneral}
\end{align}
Dividing by $r$ and taking the limit $r \to 0$ then yields the desired result via the Lie-Trotter product formula.
\end{proof}

We note that for the case of normal matrices $N$, the matrices $\Re(N)$ and $\Im(N)$ commute, which allows us to slightly simplify the above formula by employing the fact that $\exp(\Re(N)) = \big| \exp(N) \big|$. For two normal matrices the result then reads
\begin{align}
  \norm{ \exp \left( N_1 + N_2 \right) }_p \leq \norm{ \big| \exp(N_1) \big| \big| \exp(N_2) \big| }_p ,
\end{align}
generalizing an inequality by Li and Zhao~\cite{Li14}. Finally, we note that~\eqref{eq_ALTgeneral} can be viewed as an ALT inequality for general square matrices.


\section{An application: entropy inequalities} \label{sec_app}

In this section we show that the multivariate extension of the GT inequality derived in Corollary~\ref{cor_GT_steinHirschman} can be used to derive remainder terms in terms of recoverability for certain entropy inequalities.

For positive semi-definite matrices $\rho,\sigma$ with $\tr\, \rho=1$, Umegaki's \emph{quantum relative entropy}~\cite{umegaki62} is defined as
$D(\rho\|\sigma):= \tr\rho(\log\rho-\log\sigma)$ if $\rho\ll\sigma$ and as $+\infty$ if $\rho \not\ll \sigma$.
Here, $\rho \ll \sigma$ denotes that the support of $\rho$ is contained in the support of $\sigma$. 
We recall the following variational formula for the relative entropy~\cite{Petz_variational88} (see also~\cite{FBT15}):
\begin{align}\label{eq:umegaki_variational}
D(\rho\|\sigma)= \sup_{\omega > 0} \tr\rho \log \omega + 1 - \tr\exp(\log \sigma +  \log \omega) \, .
\end{align}
The \emph{measured relative entropy} is given as
$D_{\mathbb{M}}(\rho\|\sigma):=\sup_{(\cX, M)  } D\big( P_{\rho,M} \big\| P_{\sigma,M} \big)$~\cite{donald86,Pet86,HP91,FBT15},
where the optimization is over positive operator valued measures (POVMs) $M$ on the power-set of a finite set $\cX$, the probability mass functions are given by $P_{\rho,M}(x) = \tr \rho M(x)$, and $D(P\|Q)$ 
is the \emph{Kullback-Leibler divergence}~\cite{kullback51}. 
We recall the following variational formula~\cite{petz_Entropybook,FBT15}: 
\begin{align}\label{eq:measured_variational}
D_{\mathbb{M}}(\rho\|\sigma)=\sup_{\omega > 0} \tr\rho \log \omega+1-\tr\sigma \omega\,.
\end{align}

A fundamental entropy inequality~\cite{LieRus73,lindblad75} states that the quantum relative entropy is monotone under trace-preserving and completely positive maps $\mathcal{N}$, i.e.,\footnote{We note that this monotonicity statement remains valid for more general maps $\cN$~\cite{uhlmann77,hiai10,hayashi_book,reeb16,hermes15}.}   
\begin{align}\label{eq:umegkai_mono}
D(\rho\|\sigma) - D\big(\cN(\rho) \| \mathcal{N}(\sigma) \big) \geq0 \, .
\end{align}
This is closely related to the celebrated strong sub-additivity of quantum entropy~\cite{LieRus73_1,LieRus73} stating that 
\begin{align} \label{eq_SSA}
I(A:C|B)_{\rho} := H(\rho_{AB}) + H(\rho_{BC}) - H(\rho_{B}) - H(\rho_{ABC}) \geq 0
\end{align}
 for any positive semi-definite matrix $\rho_{ABC}$ on a composite Hilbert space $\cH_A \otimes \cH_B \otimes \cH_C$ with ${\tr\, \rho_{ABC}=1}$.
 Here $\rho_{AB}$, $\rho_{BC}$, and $\rho_B$ are marginals of $\rho_{ABC}$ obtained via the partial trace, and $H(\rho) = -\tr \rho \log \rho$ denotes the \emph{von Neumann entropy}.
 
Motivated by recoverability questions in quantum information theory,~\eqref{eq:umegkai_mono} and~\eqref{eq_SSA} have been refined in a series of recent works~\cite{FR14,BHOS14,TB15,SFR15,wilde15,STH15,JRSWW15}, making use of complex interpolation theory as well as asymptotic spectral pinching. With the four matrix extension of the GT inequality given by Corollary~\ref{cor_GT_steinHirschman}, we find the following statement which answers an open question stated in~\cite{JRSWW15}.
\begin{theorem}[Strengthened monotonicity for partial trace]\label{thm:strengthened_mono}
Let $\rho_{AB}$ and $\sigma_{AB}$ be positive semi-definite matrices on $\cH_A \otimes \cH_B$ such that $\rho_{AB} \ll \sigma_{AB}$ and $\tr\, \rho_{AB}=1$. Then
\begin{align}\label{eq:strengthened_mono}
  D(\rho_{AB}\|\sigma_{AB}) - D(\rho_A \| \sigma_A) \geq D_{\mathbb{M}}\big(\rho_{AB} \|\mathcal{R}_{\sigma_{AB},\tr_B}(\rho_A) \big)\, ,
\end{align}
with the rotated Petz recovery map given by
\begin{align}\label{eq:rotated_petz_partial}
\mathcal{R}_{\sigma_{AB},\tr_B}(\cdot):=\int_{-\infty}^{\infty}  \di t \, \beta_0(t)\,\mathcal{R}_{\sigma_{AB},\tr_B}^{\left[t\right]}(\cdot)\quad\mathrm{and}\quad\mathcal{R}_{\sigma_{AB},\tr_B}^{[t]}(\cdot):=\sigma_{AB}^{\frac{1+\ci t}{2}}\left(\sigma_A^{-\frac{1+\ci t}{2}}(\cdot)\sigma_A^{-\frac{1-\ci t}{2}} \otimes \id_B\right)\sigma_{AB}^{\frac{1-\ci t}{2}}.
\end{align}
\end{theorem}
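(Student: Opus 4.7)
The plan is to combine the variational characterizations~\eqref{eq:umegaki_variational} and~\eqref{eq:measured_variational} with the four matrix Golden--Thompson inequality of Corollary~\ref{cor_GT_steinHirschman} at $p=2$. Concretely, for an auxiliary $\omega'>0$ on $\cH_A \otimes \cH_B$, I would insert the Hermitian choice
\begin{align*}
\log \omega \;=\; \log \rho_A \otimes \id_B \;-\; \log \sigma_A \otimes \id_B \;+\; \log \omega'
\end{align*}
into the variational formula~\eqref{eq:umegaki_variational} for $D(\rho_{AB}\|\sigma_{AB})$. The linear term then splits cleanly as $\tr \rho_{AB} \log \omega = D(\rho_A \| \sigma_A) + \tr \rho_{AB} \log \omega'$, which is exactly the combination needed to transfer $D(\rho_A\|\sigma_A)$ to the left-hand side of~\eqref{eq:strengthened_mono}, provided the exponential term $\tr \exp(\log \sigma_{AB} + \log \omega)$ can be controlled.

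For that exponential I would apply Corollary~\ref{cor_GT_steinHirschman} with $p=2$ to the Hermitian operators $H_1 = \frac{1}{2}\log \rho_A \otimes \id_B$, $H_2 = -\frac{1}{2}\log \sigma_A \otimes \id_B$, $H_3 = \frac{1}{2}\log \sigma_{AB}$ and $H_4 = \frac{1}{2}\log \omega'$, which are arranged so that $\|\exp(\sum_k H_k)\|_2^2 = \tr \exp(\log \sigma_{AB} + \log \omega)$. Setting $M_t := \prod_{k=1}^4 \exp((1+\ci t)H_k)$, the cyclicity of the trace together with the elementary cancellation $\rho_A^{(1-\ci t)/2}\rho_A^{(1+\ci t)/2}=\rho_A$ reduces $\|M_t\|_2^2 = \tr M_t^\dagger M_t$ to $\tr \omega'\,\mathcal{R}^{[-t]}_{\sigma_{AB},\tr_B}(\rho_A)$. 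Because $\beta_0$ is even in $t$, substituting $t \mapsto -t$ in the integral identifies the $\beta_0$-average of this expression with $\tr \omega'\,\mathcal{R}_{\sigma_{AB},\tr_B}(\rho_A)$. Applying Jensen's inequality to pull the exponential inside the integral in Corollary~\ref{cor_GT_steinHirschman} then yields the clean bound $\tr\exp(\log\sigma_{AB}+\log\omega) \leq \tr \omega'\,\mathcal{R}_{\sigma_{AB},\tr_B}(\rho_A)$.

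Substituting back into~\eqref{eq:umegaki_variational} gives, for every $\omega'>0$,
\begin{align*}
D(\rho_{AB}\|\sigma_{AB}) - D(\rho_A\|\sigma_A) \;\geq\; \tr \rho_{AB} \log \omega' \;+\; 1 \;-\; \tr \omega'\,\mathcal{R}_{\sigma_{AB},\tr_B}(\rho_A),
\end{align*}
after which taking the supremum over $\omega'>0$ and invoking the variational formula~\eqref{eq:measured_variational} for the measured relative entropy completes the proof. The main obstacle I expect is the algebraic bookkeeping in the second step: one must line up the four factors $\rho_A$, $\sigma_A$, $\sigma_{AB}$, $\omega'$ so that after cyclic rearrangement the combination reproduces the rotated Petz structure of $\mathcal{R}^{[t]}_{\sigma_{AB},\tr_B}$, and one has to track the sign of $t$ with enough care to invoke the evenness of $\beta_0$. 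The choice $p=2$ is essential, since $\|M_t\|_2^2 = \tr M_t^\dagger M_t$ is exactly the algebraic shape that matches the definition of $\mathcal{R}^{[t]}$, whereas $p=1$ would only produce $\tr |M_t|$. A minor technical point is that one should first reduce to the case where $\sigma_{AB}$ is positive definite (by restricting to its support, which contains that of $\rho_{AB}$) and then conclude by continuity.
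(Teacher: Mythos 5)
Your proof is correct and takes essentially the same route as the paper: both hinge on the four-matrix Golden--Thompson inequality of Corollary~\ref{cor_GT_steinHirschman} at $p=2$ applied to $\log\rho_A$, $-\log\sigma_A$, $\log\sigma_{AB}$, $\log\omega$, combined with the variational formulas~\eqref{eq:umegaki_variational} and~\eqref{eq:measured_variational}. The only cosmetic difference is that the paper first rewrites the difference of relative entropies as a single relative entropy against the modified reference $\exp(\log\sigma_{AB}+\log\rho_A\otimes\id_B-\log\sigma_A\otimes\id_B)$ (its Eq.~\eqref{eq:contractlog}) and then invokes the variational formula, whereas you reparametrize $\omega$ inside the variational formula directly; since $\omega\mapsto\omega'$ is a bijection on positive definite operators, the two arrangements are equivalent.
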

\begin{proof}
Let us recall Corollary~\ref{cor_GT_steinHirschman} applied for $n=4$ and $p=2$. Using the concavity of the logarithm  and Jensen's inequality, it yields 
\begin{align}
  &\tr \exp(H_1+H_2+H_3+H_4) \notag \\
  &\hspace{0mm}\leq \int_{- \infty}^\infty \di t \beta_0(t)\, \tr \exp(H_1) \exp\left(\frac{1\!+\!\ci t}2 H_2 \right) \exp\left(\frac{1\!+\!\ci t}2 H_3 \right) \exp(H_4) \exp\left(\frac{1\!-\!\ci t}2 H_3 \right) \exp\left(\frac{1\!-\!\ci t}2 H_2 \right) \label{eq:GT_4matrix}
\end{align}
for Hermitian matrices $\{H_i\}_{i=1}^4$. Moreover, by definition of the relative entropy for positive definite operators $\rho_{AB}$ and $\sigma_{AB}$, we have
\begin{align}
 D(\rho_{AB}\|\sigma_{AB}) - D(\rho_A \| \sigma_A) = D\bigl( \rho_{AB} \| \exp(\log \sigma_{AB} + \log \rho_A \otimes \id_B - \log \sigma_A \otimes \id_B) \big) \,. \label{eq:contractlog}
\end{align}
For positive semi-definite operators $\rho_{AB}$ and $\sigma_{AB}$, the Hermitian operators $\log \sigma_{AB}$, $\log \rho_A$ and $\log \sigma_A$ are well-defined under the convention $\log 0=0$. Under this convention, the above equality~\eqref{eq:contractlog} also holds for positive semi-definite operators as long as $\rho_{AB} \ll \sigma_{AB}$, which is required by the theorem.  
By the variational formula for the relative entropy~\eqref{eq:umegaki_variational} we thus find 
 \begin{align}
  &D(\rho_{AB}\|\sigma_{AB}) - D(\rho_A \| \sigma_A) \nonumber  \\
 &\qquad= \sup_{\omega_{AB}>0} \tr \, \rho_{AB} \log \omega_{AB} + 1 - \tr \exp(\log \sigma_{AB} + \log \rho_A \otimes \id_B - \log \sigma_A \otimes \id_B + \log \omega_{AB}) \\
 &\qquad\geq\sup_{\omega_{AB}>0} \tr \, \rho_{AB} \log \omega_{AB} + 1 -  \int_{- \infty}^\infty \di t \, \beta_0(t)\, \tr \, \sigma_{AB}^{\frac{1+\ci t}{2}} \left( \sigma_A^{-\frac{1+\ci t}{2}} \rho_A \sigma_A^{-\frac{1-\ci t}{2}}  \otimes \id_B \right) \sigma_{AB}^{\frac{1-\ci t}{2}} \omega_{AB} \\
 &\qquad= D_{\mathbb{M}} \left(\rho_{AB} \middle\| \int_{- \infty}^\infty \di t \beta_0(t) \,  \sigma_{AB}^{\frac{1+\ci t}{2}}  \left( \sigma_A^{-\frac{1+\ci t}{2}} \rho_A \sigma_A^{-\frac{1-\ci t}{2}}  \otimes \id_B \right) \sigma_{AB}^{\frac{1-\ci t}{2}}  \right) \ ,
\end{align}
where the single inequality follows by the four matrix extension of the GT inequality in~\eqref{eq:GT_4matrix}.
The final step uses the variational formula~\eqref{eq:measured_variational} for the measured relative entropy.
\end{proof}

We note that the four matrix extension of the GT inequality is the only inequality used in the proof of Theorem~\ref{thm:strengthened_mono}. 
More properties of the recovery map $\mathcal{R}_{\sigma_{AB},\tr_B}$ given by~\eqref{eq:rotated_petz_partial} are discussed in~\cite{JRSWW15}.

Theorem~\ref{thm:strengthened_mono} implies two other interesting statements. If we substitute $\rho_{AB} \leftarrow \rho_{ABC}$, $\rho_A \leftarrow \rho_{AB}$, $\sigma_{AB} \leftarrow \id_A \otimes \rho_{BC}$, and $\sigma_A \leftarrow \id_A \otimes \rho_B$ we immediately find a remainder term for the conditional quantum mutual information, namely
\begin{align}
I(A:C|B)_{\rho} \geq D_{\mathbb{M}}\big(\rho_{ABC} \big\|\mathcal{R}_{\rho_{BC},\tr_C} \otimes \mathcal{I}_A (\rho_{AB}) \big) ,
\end{align}
where $\mathcal{I}_A$ is the identity map and $\mathcal{R}_{\rho_{BC},\tr_C}$ is defined in~\eqref{eq:rotated_petz_partial}.
Moreover, using the Stinespring dilation theorem~\cite{stinespring55} and the fact that the relative entropy is invariant under isometries, Theorem~\ref{thm:strengthened_mono} generalizes to the following result. 
\begin{corollary}[Strengthened monotonicity]\label{cor:strengthened_mono_d}
Let $\rho,\sigma$ be positive semi-definite matrices such that $\rho \ll \sigma$, $\tr\, \rho=1$, and $\mathcal{N}$ be a trace-preserving completely positive map acting on these matrices. Then
\begin{align}\label{eq:strengthened_mono_d}
  D(\rho\|\sigma) - D\left(\cN(\rho) \middle\| \mathcal{N}(\sigma) \right) \geq D_{\mathbb{M}}\left(\rho \middle\|\mathcal{R}_{\sigma,\mathcal{N}}\circ \mathcal{N}(\rho) \right) ,
\end{align}
with the rotated Petz recovery map given by
\begin{align}\label{eq:rotated_petz}
\mathcal{R}_{\sigma,\mathcal{N}}(\cdot):=\int_{-\infty}^{\infty} \di t \, \beta_0(t)\,\mathcal{R}_{\sigma,\mathcal{N}}^{\left[t\right]}(\cdot)\quad\mathrm{and}\quad\mathcal{R}_{\sigma,\mathcal{N}}^{[t]}(\cdot):=\sigma^{\frac{1+\ci t}{2}}\mathcal{N}^\dagger\left(\mathcal{N}(\sigma)^{-\frac{1+\ci t}{2}}(\cdot)\mathcal{N}(\sigma)^{-\frac{1-\ci t}{2}}\right)\sigma^{\frac{1-\ci t}{2}}.
\end{align}
\end{corollary}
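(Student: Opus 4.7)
The plan is to reduce the general statement to Theorem~\ref{thm:strengthened_mono} via Stinespring dilation. Fix an isometry $V:\cH_A \to \cH_B \otimes \cH_E$ such that $\cN(X) = \tr_E(V X V^\dagger)$, and set $\hat\rho := V\rho V^\dagger$ and $\hat\sigma := V\sigma V^\dagger$ on $\cH_B \otimes \cH_E$. By isometric invariance of the relative entropy, $D(\rho\|\sigma) = D(\hat\rho\|\hat\sigma)$, while $\tr_E(\hat\rho) = \cN(\rho)$ and $\tr_E(\hat\sigma) = \cN(\sigma)$. Applying Theorem~\ref{thm:strengthened_mono} to $\hat\rho, \hat\sigma$ with the partial trace over $E$ immediately yields
\begin{align}
D(\rho\|\sigma) - D\bigl(\cN(\rho)\,\big\|\,\cN(\sigma)\bigr) \;\geq\; D_{\M}\bigl(\hat\rho\,\big\|\,\mathcal{R}_{\hat\sigma,\tr_E}(\cN(\rho))\bigr).
\end{align}

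The next step is to identify the right-hand recovery expression with the dilated version of $\cR_{\sigma,\cN}(\cN(\rho))$. Because $V$ is an isometry, $V^\dagger V = \id_A$ and the spectral decomposition of $\sigma$ transports under $V$: for any complex $z$ with $\operatorname{Re}(z)>0$ one has $\hat\sigma^z = V \sigma^z V^\dagger$ (using the convention $0^z = 0$). The adjoint channel is $\cN^\dagger(Y) = V^\dagger(Y\otimes \id_E) V$. Substituting these into the definition of $\cR_{\hat\sigma,\tr_E}^{[t]}$ and sandwiching, a direct computation gives
\begin{align}
\cR_{\hat\sigma,\tr_E}^{[t]}\!\bigl(\cN(\rho)\bigr) \;=\; V\,\cR_{\sigma,\cN}^{[t]}\!\bigl(\cN(\rho)\bigr)\,V^\dagger,
\end{align}
and integrating against $\beta_0(t)$ yields $\cR_{\hat\sigma,\tr_E}(\cN(\rho)) = V\,\cR_{\sigma,\cN}(\cN(\rho))\,V^\dagger$.

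To conclude, I would invoke isometric invariance of the measured relative entropy applied to the pair $\hat\rho = V\rho V^\dagger$ and $V\cR_{\sigma,\cN}(\cN(\rho))V^\dagger$, which gives $D_{\M}(\hat\rho\|\cR_{\hat\sigma,\tr_E}(\cN(\rho))) = D_{\M}(\rho\|\cR_{\sigma,\cN}(\cN(\rho)))$. Combined with the Stinespring reduction above, this establishes~\eqref{eq:strengthened_mono_d}.

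The main obstacle I anticipate is not conceptual but is the careful verification of the identity $\hat\sigma^z = V\sigma^z V^\dagger$ for complex exponents in the not-strictly-positive case; this requires combining $V^\dagger V = \id_A$ with the convention $0^z = 0$ and noting that the hypothesis $\rho \ll \sigma$ transfers to $\cN(\rho)\ll \cN(\sigma)$, so that the negative complex powers of $\cN(\sigma)$ appearing in $\cR_{\sigma,\cN}^{[t]}$ act on a well-defined support. Once this bookkeeping is done, every other step is either a definitional manipulation or an application of Theorem~\ref{thm:strengthened_mono}.
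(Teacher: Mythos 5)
Your proposal is correct and follows essentially the same Stinespring-dilation argument as the paper's own proof: lift $\rho,\sigma$ by an isometry $V$, apply Theorem~\ref{thm:strengthened_mono} to the partial-trace case, and then push the rotated Petz map and the measured relative entropy back through $V$ using $V^\dagger V = \id$, $\hat\sigma^z = V\sigma^z V^\dagger$, and $\cN^\dagger(Y) = V^\dagger(Y\otimes\id_E)V$, differing only in the labeling of the output/environment systems.
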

\begin{proof}
  Let us introduce the Stinespring dilation of $\cN$, denoted $U$, and the states $\rho_{AB} = U \rho U^{\dag}$, $\quad \sigma_{AB} = U \sigma U^{\dag}$ such that $\cN(\rho) = \rho_A$ and $\cN(\sigma) = \sigma_B$. Then, using the fact that the relative entropy is invariant under isometries, we have
  \begin{align}
  D(\rho\|\sigma) - D\left(\cN(\rho) \middle\| \mathcal{N}(\sigma) \right) &= D(\rho_{AB}\|\sigma_{AB}) - D(\rho_A\|\sigma_A) \\
  &\geq D_{\mathbb{M}}\big(\rho_{AB} \|\mathcal{R}_{\sigma_{AB},\tr_B}(\rho_A) \big) = D_{\mathbb{M}}\left(\rho \middle\|\mathcal{R}_{\sigma,\mathcal{N}}\circ \mathcal{N}(\rho) \right) ,
  \end{align}
  where the inequality is due to Theorem~\ref{thm:strengthened_mono} and the last equality uses again invariance under isometries and the fact that
  \begin{align}
    U^{\dag} \mathcal{R}_{\sigma_{AB},\tr_B}^{[t]}(\cdot) U &= U^{\dag} U \sigma^{\frac{1+\ci t}{2}} U^{\dag} \left( \mathcal{N}(\sigma)^{-\frac{1+\ci t}{2}} ( \cdot ) \mathcal{N}(\sigma)^{-\frac{1-\ci t}{2}} \otimes \id_B \right) U \sigma^{\frac{1-\ci t}{2}} U^{\dag} U \\
    &= \sigma^{\frac{1+\ci t}{2}} \cN^{\dag} \left( \mathcal{N}(\sigma)^{-\frac{1+\ci t}{2}} ( \cdot ) \mathcal{N}(\sigma)^{-\frac{1-\ci t}{2}} \right) \sigma^{\frac{1-\ci t}{2}}
    = \mathcal{R}_{\sigma,\cN}^{[t]}(\cdot) \,.
  \end{align}
\end{proof}

We note that Corollary~\ref{cor:strengthened_mono_d} is no longer valid if we replace the measured relative entropy in~\eqref{eq:strengthened_mono_d} with a relative entropy. This leads us to believe that~\eqref{eq:strengthened_mono_d} cannot be further improved.

The right-hand side of~\eqref{eq:strengthened_mono_d} can be relaxed using Uhlmann's \emph{fidelity}, $F(\rho,\sigma):=\norm{\sqrt{\rho}\sqrt{\sigma}}_1^2$. It is well known that $D_{\mathbb{M}}(\rho\|\sigma) \geq - \log F(\rho, \sigma)$.\footnote{This follows by the monotonicity of quantum R\'enyi divergence in the order parameter~\cite{MLDSFT13} and of the fact that for any two states there exists an optimal measurement that does not increase their fidelity~\cite[Section~3.3]{Fuc96}.}
Therefore, Corollary~\ref{cor:strengthened_mono_d} implies
\begin{align}
  D(\rho\|\sigma) - D\big(\cN(\rho) \| \mathcal{N}(\sigma) \big) \geq -\log F \big(\rho, \mathcal{R}_{\sigma,\mathcal{N}}\circ \mathcal{N}(\rho) \big) \, .
\end{align}
Moreover, Corollary~\ref{cor:strengthened_mono_d} can be transformed into universal remainder terms (in terms of recoverability with the measured relative entropy) for other entropy inequalities, such as concavity of the conditional entropy and joint convexity of the relative entropy~\cite{BLW14}. We refer to~\cite[Section~5]{JRSWW15} for a more detailed discussion of these bounds.

We also want to refer the reader to Appendix~\ref{app_recovery} where we give a different derivation that yields lower and upper bounds on the difference of relative entropies in Theorem~\ref{thm:strengthened_mono}. This derivation follows the structure of Lieb and Ruskai's original proof of strong sub-additivity~\cite{LieRus73_1,LieRus73}, i.e., it uses the Peierls-Bogoliubov inequality followed by an extension of the GT inequality (see also~\cite{CL14}). However, whereas Lieb and Ruskai use the three matrix extension of the GT inequality (Lieb's triple matrix inequality) we use the four matrix extension of the GT inequality, leading us to a stronger statement.


\section{Discussion} \label{sec_conc}

We discussed two techniques to prove trace inequalities. One is based on asymptotic pinching and the other one uses complex interpolation theory. Both methods lead to transparent and direct proofs of generalized multivariate extensions of the GT and also the more general ALT inequalities. We believe that these methods can be used to prove trace inequalities beyond the extensions of the GT and ALT inequalities studied in this article. For example in~\cite{HP93,ando94}, complementary GT and ALT inequalities have been shown in terms of matrix means. It is left for future research to investigate if these inequalities can be obtained (and possibly be extended to the multivariate case) via pinching or interpolation theory.

Hansen gave an alternative multivariate extension of the GT inequality~\cite{hansen2015} that can be considered an interpolation between the original GT inequality~\eqref{eq_GToriginal} and the operator Jensen inequality~\cite{hansen82,hansen03}. It would be interesting to unify his and our result.
Moreover, Lieb showed that his triple matrix inequality~\eqref{eq_lieb3} is equivalent to many other interesting statements such as several concavity results~\cite{Lieb73}. As Corollary~\ref{cor_GT_steinHirschman} generalizes the triple matrix inequality, it is natural to ask if it can be used to prove more general concavity results.

Ahlswede and Winter noticed that the GT inequality can be used to prove tail bounds for sums of random matrices via the Laplace transform method~\cite{AW02}. As the (original) GT inequality is only valid for two matrices it has to be applied sequentially. Later, Tropp realized that sharper tail bounds can be obtained by using Lieb's concavity theorem instead of the GT inequality~\cite{Tropp11}. An interesting question is whether the multivariate extension of the GT inequality derived in this article (see Corollary~\ref{cor_GT_steinHirschman}) can be used to prove tail bounds for random matrices.


\paragraph*{Acknowledgments.} Elliott H.~Lieb's talk at the ``Beyond I.I.D.~in Information Theory'' workshop in Banff inspired us to derive an alternative proof for his triple matrix inequality. We thank Christian Majenz for allowing us to include his counterexample for a three matrix GT inequality without rotations. 
We would also like to thank J\"urg Fr\"ohlich, Aram W.~Harrow, Spyridon Michalakis, and Renato Renner for useful discussions about trace inequalities. 
We thank Marius C.~Lemm, Thomas Vidick, and Mark M.~Wilde for comments on an earlier draft.
MB and MT thank the Institute for Theoretical Physics at ETH Zurich for hosting them when this project was initiated. 
DS acknowledges support by the Swiss National Science Foundation (SNSF) via the National Centre of Competence in Research ``QSIT'' and by the European Commission via the project ``RAQUEL''.
MB acknowledges funding by the SNSF through a fellowship, funding by the Institute for Quantum Information and Matter (IQIM), an NSF Physics Frontiers Center (NFS Grant PHY-1125565) with support of the Gordon and Betty Moore Foundation (GBMF-12500028), and funding support form the ARO grant for Research on Quantum Algorithms at the IQIM (W911NF-12-1-0521).
MT is funded by an ARC Discovery Early Career Researcher Award (DECRA) fellowship and acknowledges support from the ARC Centre of Excellence for Engineered Quantum Systems (EQUS). 


\appendix


\section{About the probability distribution in Corollary~\ref{cor_GT_steinHirschman} } \label{app_beta0}

As observed in~\cite{thompson65}, we have $\tr \exp(H_1 + H_2 + H_3)  \nleq \tr \exp(H_1) \exp(H_2)\exp(H_3)$ in general. Also a more symmetric conjecture does not hold in general~\cite{majenz16}, i.e.,
\begin{align}
\tr \exp(H_1 + H_2 + H_3)  \nleq \tr \exp\left(H_1\right) \exp\left(\frac12 H_2 \right) \exp\left(H_3\right) \exp\left(\frac12 H_2 \right) .
\end{align}
For the following discussion, let us define 
\begin{align} \label{eq_defGamma}
\gamma(t):=  \tr\,  A_3^{\frac{1}{2}} A_2^{\frac{1}{2} + \frac{\ci t}{2}} A_1 A_2^{\frac{1}{2}-\frac{\ci t}{2}} A_3^{\frac{1}{2}}  \quad \textnormal{and} \quad \kappa:=\tr \exp(\log A_1 + \log A_2 + \log A_3) \ ,
\end{align}
for three positive definite matrices $A_1, A_2$, and $A_3$.
As discussed in~\eqref{eq_ourGT3}, Corollary~\ref{cor_GT_steinHirschman} implies that
\begin{align} \label{eq_shown}
\kappa \leq \int_{-\infty}^\infty \di t \, \beta_0(t) \gamma(t) =:\xi \ .
\end{align}

It is a natural question to investigate how much freedom we have in choosing a probability distribution (different than $\beta_0$) such that~\eqref{eq_shown} remains valid, where the distribution should be independent of the matrices $A_1$, $A_2$, and $A_3$. The following two examples indicate that it might be difficult to find a distribution different than $\beta_0$ that satisfies~\eqref{eq_shown} since it cannot be too narrow (around $t=0$) but also not too flat, either. 
Let us consider the positive semi-definite matrices~\cite{majenz16}:
\begin{align} \label{ex_mat1}
A_1=\frac{1}{4}\left( \begin{matrix}
5 & 2 \\
2  & 1
\end{matrix} \right), \quad 
A_2=\frac{1}{4}\left( \begin{matrix}
1 & -2 \\
-2  & 2
\end{matrix} \right), \quad \textnormal{and} \quad
A_3=\frac{1}{4}\left( \begin{matrix}
8 & -2 \\
-2  & 1
\end{matrix} \right)\ .
\end{align}
As a second example we consider the positive semi-definite matrices
\begin{align} \label{ex_mat2}
A_1=\frac{1}{8}\left( \begin{matrix}
4 & 2-\ci \\
2+\ci  & 3
\end{matrix} \right), \,\, \,\,
A_2=\frac{1}{60}\left( \begin{matrix}
15 & -5-3\ci \\
 -5+3\ci  & 12
\end{matrix} \right), \,\,\,\, \textnormal{and} \,\,\,\,
A_3=\frac{1}{20}\left( \begin{matrix}
15 & 10-5\ci \\
 10+5\ci  & 11
\end{matrix} \right) \ .
\end{align}
Figure~\ref{fig_beta} compares $\kappa$ with $\gamma(t)$. We note that the matrices~\eqref{ex_mat1} also show that $\kappa > \gamma(0)$ is possible, i.e., a three matrix extension of the GT inequality without any phases does not hold in general~\cite{majenz16}.
\begin{figure}[!htb]
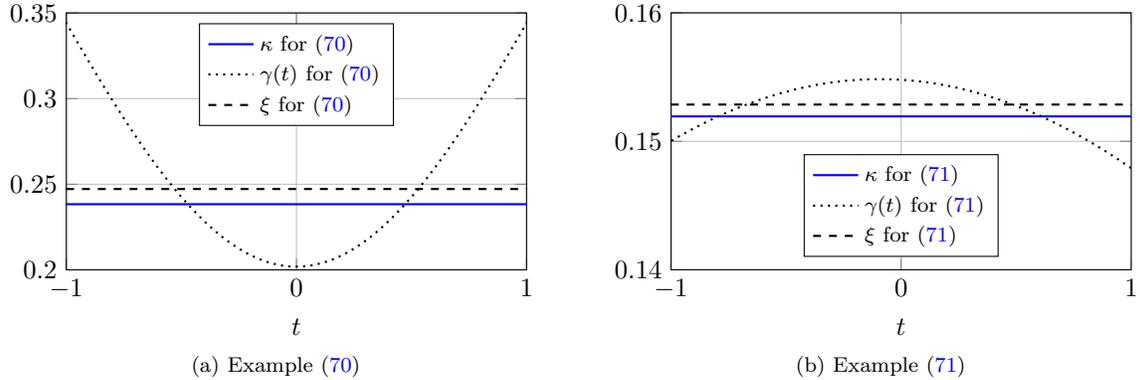

\centering
\subfloat[Example~\eqref{ex_mat1}]{ \input{beta1.tex} } \hspace{5mm}
\subfloat[Example~\eqref{ex_mat2}]{\input{beta2.tex}}\\
\caption{This plot compare $\gamma(t)$ with $\kappa$ for the matrices defined in~\eqref{ex_mat1} and~\eqref{ex_mat2}. If we want $\kappa \leq \int \mu(\di t) \gamma(t)$ to hold for some probability measure $\mu$ on $\R$ that does not depend on $A_1$, $A_2$ and $A_3$, these two example show that $\mu$ cannot be too narrow (around $t=0$) but also not too flat, either.}
\label{fig_beta}
\end{figure}

\section{Proof of Lemma~\ref{lem_pinchingIntegral}} \label{app_pinchingLemma}

We want to write $\cP_{A}[X]= \sum_{\lambda \in \spec(A)} P_{\lambda}X P_{\lambda}$ in the from $\cP_{A}[X] = \int \mu(\di t) A^{\ci t} X A^{-\ci t}$, where $\mu$ is a probability measure on $\R$. Recalling that $A=\sum_{\lambda \in \spec(A)} \lambda P_\lambda$ and the fact that the eigenvectors to distinct eigenvalues of positive semi-definite matrices are orthogonal we find
\begin{align}
A^{\ci t} 
= \sum_{\lambda \in \spec(A)} \lambda^{\ci t} P_\lambda \ .
\end{align}
and thus
\begin{align}
A^{\ci t} X A^{-\ci t} = \sum_{\lambda,\lambda' \in \spec(A)} \lambda^{\ci t} (\lambda')^{-\ci t} P_\lambda X P_{\lambda'} \ .
\end{align}
Defining $\tilde \lambda = \log \lambda$ and $\tilde \lambda' = \log \lambda'$ we thus require 
\begin{align} \label{eq_FT}
\hat \mu(\tilde \lambda' - \tilde \lambda) = \int \mu(\di t) \exp\big(- \ci t(\tilde \lambda' - \tilde \lambda) \big) = \int \mu(\di t) \lambda^{\ci t} (\lambda')^{-\ci t} = \delta\{\lambda=\lambda' \} = \delta\{ \tilde \lambda = \tilde \lambda'Ã\} \ ,
\end{align}
where $\hat \mu$ denotes the Fourier transform of $\mu$. We thus require that (i) $\hat \mu(0)=1$ and (ii) $\hat \mu(\tilde \lambda - \tilde \lambda')=0$ for all $\exp(\tilde \lambda), \exp(\tilde \lambda') \in \spec(A)$. Let us define
\begin{align}
\Delta:= \min\{|\tilde \lambda - \tilde \lambda'|: \exp(\tilde \lambda), \exp(\tilde \lambda') \in \spec(A), \tilde \lambda \ne \tilde \lambda' \} \ .
\end{align}
Furthermore for a fixed $\tau> 0$ we define the triangular function
\begin{align}
r_{\tau}(t)=\left \lbrace \begin{array}{l l}
1-\frac{|t|}{\tau},& \textnormal{if } |t| < \tau \\
0 & \textnormal{otherwise} \ .
\end{array} \right.
\end{align}
We next pick $\hat \mu(\xi) = r_{\frac{\Delta}{2}}(\xi)$ which clearly satisfies the requirements (i) and (ii) mentioned above. Its inverse Fourier transform can be computed as
\begin{align} \label{eq_constructedMU}
\mu(t) = \frac{1}{2\pi} \int_{-\infty}^\infty \di \xi\, r_{\frac{\Delta}{2}}(\xi) \exp(\ci t \xi) = \frac{1-\cos(\Delta\, t / 2 )}{\pi \Delta \, t^2 /2} \ .
\end{align}
It is immediate to verify that $\mu$ as given in~\eqref{eq_constructedMU} is a probability distribution on $\R$ (i.e., $\mu(t) \geq 0$ for all $t \in \R$ and  $\int_{\R} \di t \, \mu(t) =1$). This is the distribution that satisfies the assertion of Lemma~\ref{lem_pinchingIntegral} and thus completes the proof.


\section{Proof of Lemma~\ref{lem_normConvexNEW}} \label{app_TensorConc}

Let $\cH$ denote the Hilbert space of dimension $d$ where the matrices $A_x$ act on.
For any $x$, consider the spectral decomposition $A_x = \sum_{k} \lambda_k \ket{k}\!\bra{k}$ in Dirac bra-ket notation. Introducing an isometric space $\cH'$, we define the vector $\ket{v_x} \in \cH \otimes \cH'$ by $\ket{v_x} = \sum_k \sqrt{\lambda_k} \ket{k} \otimes \ket{k}$\,---\,i.e., the purification of $A_x$. Now note that
the projectors $( \ket{v_x}\bra{v_x} )^{\otimes m}$ lie in the symmetric subspace of $(\cH \otimes \cH')^{\otimes m}$ whose dimension grows as $\poly(m)$.\footnote{This follows from the fact that the dimension of the symmetric subspace of $\cH^{\otimes m}$ is equal to the number of types of sequences of $d$ symbols of length $m$, which is polynomial in $m$ (as shown in~\eqref{eq_types}).} 
Moreover, we have
\begin{align}
\int \mu(\di x) A_k^{\otimes m} =  \int \mu(\di x) \tr_{\cH^{'\otimes m}} \left(  \ket{v_x}\bra{v_x} \right)^{\otimes m} \,.
\end{align}
Then by Carath\'eodory's theorem (see, e.g.,~\cite[Theorem 18]{eggleston_book}) there exists a discrete probability measure~$P$ on $\in \cI \subset \cX$ with $|\cI| = \poly(m)$ such that
\begin{align}
\int \mu(\di x) A_x^{\otimes m} = \sum_{x \in \cI} P(x) A_x^{\otimes m} 
\quad \textrm{and} \quad
\int \mu(\di x) \norm{A_x^{\otimes m}}_p = \sum_{x \in \cI} P(x) \norm{A_x^{\otimes m}}_p \,.\label{eq_sd1}
\end{align}

If $p\in (0,1)$ the Schatten $p$-norms only satisfy a weakened version of the triangle inequality (see, e.g.,~\cite[Equation~20]{Kittaneh97}), which states that
\begin{align}
  \norm{\sum_{x=1}^n A_x }_p^p \leq \sum_{x=1}^n \norm{ A_x }_p^p \,.
\end{align}
Hence, we find the following chain of inequalities
\begin{align}
\frac{1}{m} \log \norm{\int \mu(\di x) A_x^{\otimes m}}_p 
&=\frac{1}{m} \log \norm{\sum_{x \in \cI} P(x)A_x^{\otimes m}}_p \\
&\leq \frac{1}{m} \log \left( \sum_{x \in \cI} \norm{P(x) A_x^{\otimes m}}_p^p \right)^{\frac{1}{p}} \\
&= \frac{1}{m} \log \left( |\cI|^\frac{1}{p} \Big( \frac{1}{|I|} \sum_{x \in I} \norm{P(x) A_x^{\otimes m}}_p^p \Big)^{\frac{1}{p}} \right) \\
&\leq \frac{1}{m} \log\left(|\cI|^{\frac{1}{p}-1} \sum_{x\in \cI} \norm{P(x) A_x^{\otimes m}}_p \right) \\
&= \frac{1}{m} \log \left( \sum_{x \in \cI} P(x) \norm{A_x^{\otimes m}}_p \right) + \frac{1}{m} \frac{1-p}{p} \log|\cI| \\
&= \frac{1}{m} \log \left( \int \mu(\di x) \norm{A_x^{\otimes m}}_p \right) + \frac{\log \poly(m)}{m} \ ,
\end{align}
where the second inequality uses that the map $t \mapsto t^{\frac{1}{p}}$ is convex for $p\in (0,1)$. The final step follows from the fact that $|\cI| = \poly(m)$.

\section{Proof of Theorem~\ref{thm_hirschman}} \label{app_Hirschman}

We follow the argument given in~\cite[Appendix A]{JRSWW15}, and take care of the explicit conditions on the Schatten norms of $G(z)$.
We recall Hirschman's strengthening~\cite{H52} (see also~\cite[Lemma~1.3.8]{G08}) of Hadamard's three line theorem.
\begin{lemma}[Hirschman]
\label{lm:hirschman} Let $S:=\left\{  z\in\mathbb{C}:0\leq\operatorname{Re}(z) \leq1\right\}$ and let $g(z)$ be uniformly bounded on $S$, holomorphic on the interior of $S$ and continuous on the
boundary. Then for $\theta\in(0,1)$, we have
\begin{equation}
\log\left\vert g(\theta)\right\vert \leq\int_{-\infty}^{\infty} \di t\ 
\beta_{1-\theta}(t)\log  \left\vert g(\ci t)\right\vert ^{1-\theta}
+\beta_{\theta}(t)\log  \left\vert g(1+\ci t)\right\vert ^{\theta}
 \,.
\end{equation}
Moreover, the assumption that the function is uniformly bounded can be relaxed to
\begin{align}
\sup_{z \in S} \exp\big(-a |\!\operatorname{Im}(z)| \big) \log |g(z)| \leq A  \quad \text{for some constants} \quad A<\infty \quad \text{and} \quad a<\pi \, .
\end{align}
\end{lemma}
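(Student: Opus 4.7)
The strategy is to recognize this inequality as the harmonic-measure / Poisson-kernel representation for the subharmonic function $\log|g|$ on the strip $S$, and then to identify the two boundary harmonic measures explicitly with the densities $(1-\theta)\beta_{1-\theta}(t)\,\di t$ and $\theta\,\beta_\theta(t)\,\di t$. For $g$ bounded holomorphic on $S$ and continuous on $\overline S$, the function $u(z):=\log|g(z)|$ is subharmonic on the interior of $S$, bounded above, and upper-semicontinuous up to the boundary (taking the value $-\infty$ on the zero set of $g$). The standard majorization principle for such functions yields
\begin{align}
u(\theta) \,\leq\, \int_{-\infty}^{\infty} u(\ci t)\,\di\omega_0(t) \,+\, \int_{-\infty}^{\infty} u(1+\ci t)\,\di\omega_1(t),
\end{align}
where $\omega_0,\omega_1$ are the harmonic measures on the two boundary lines of $S$ seen from the interior point $\theta$. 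The substance of the lemma is to compute these measures.

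To that end, I would use the conformal map $\varphi(z)=e^{\ci\pi z}$, which sends $S$ onto the upper half-plane $\mathbb H=\{\operatorname{Im} w>0\}$, the point $\theta$ to $w_0=e^{\ci\pi\theta}=\cos(\pi\theta)+\ci\sin(\pi\theta)$, the boundary $\{\operatorname{Re} z=0\}$ to the positive real axis via $\ci t\mapsto e^{-\pi t}$, and the boundary $\{\operatorname{Re} z=1\}$ to the negative real axis via $1+\ci t\mapsto -e^{-\pi t}$. Pulling back the Poisson kernel on $\mathbb H$,
\begin{align}
P_{w_0}(u) \,=\, \frac{1}{\pi}\cdot\frac{\sin(\pi\theta)}{(u-\cos(\pi\theta))^2+\sin^2(\pi\theta)} \,=\, \frac{1}{\pi}\cdot\frac{\sin(\pi\theta)}{u^2-2u\cos(\pi\theta)+1},
\end{align}
via $u=\pm e^{-\pi t}$ with $|\di u|=\pi e^{-\pi t}\,\di t$ and multiplying numerator and denominator by $e^{\pi t}$ so that $e^{\pi t}+e^{-\pi t}=2\cosh(\pi t)$, yields after a one-line simplification
\begin{align}
\di\omega_0(t) \,=\, \frac{\sin(\pi\theta)}{2\bigl(\cosh(\pi t)-\cos(\pi\theta)\bigr)}\,\di t, \qquad \di\omega_1(t) \,=\, \frac{\sin(\pi\theta)}{2\bigl(\cosh(\pi t)+\cos(\pi\theta)\bigr)}\,\di t.
\end{align}
Using $\sin(\pi(1-\theta))=\sin(\pi\theta)$ and $\cos(\pi(1-\theta))=-\cos(\pi\theta)$, these expressions coincide exactly with $(1-\theta)\beta_{1-\theta}(t)\,\di t$ and $\theta\,\beta_\theta(t)\,\di t$ as defined in~\eqref{eq_densi}. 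Absorbing the factors $(1-\theta)$ and $\theta$ back into the logarithms gives precisely~\eqref{eq:oper-hirschman} in the bounded case.

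To remove the boundedness hypothesis under the weaker growth assumption $\log|g(z)|\leq A e^{a|\operatorname{Im} z|}$ with $a<\pi$, I would use a Phragm\'en-Lindel\"of-style auxiliary factor. Choose $b\in(a,\pi)$ and set
\begin{align}
g_\epsilon(z) \,:=\, g(z)\,\exp\!\bigl(-\epsilon\cos(b(z-\tfrac{1}{2}))\bigr),\qquad \epsilon>0.
\end{align}
Writing $z=x+\ci y$, we have $|g_\epsilon(z)|=|g(z)|\exp\bigl(-\epsilon\cos(b(x-\tfrac{1}{2}))\cosh(by)\bigr)$; since $b<\pi$, $\cos(b(x-\tfrac{1}{2}))\geq\cos(b/2)>0$ throughout $S$, so the auxiliary factor is bounded by $1$ on $S$ and decays like $\exp(-\tfrac{\epsilon}{2}\cos(b/2)\,e^{b|y|})$ at infinity. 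As $b>a$, this decay dominates $Ae^{a|y|}$, making $g_\epsilon$ bounded on $S$ and continuous on $\overline S$. Applying the bounded case to $g_\epsilon$ and then letting $\epsilon\to0$, I would invoke dominated convergence using that $\log|g(\ci t)|$ and $\log|g(1+\ci t)|$ are bounded above by $Ae^{a|t|}$, which is integrable against both $\beta_{1-\theta}$ and $\beta_\theta$ thanks to their $e^{-\pi|t|}$ tails and the hypothesis $a<\pi$; the correction terms introduced by the auxiliary factor are $O(\epsilon)\int\beta_{\cdot}(t)\cosh(bt)\,\di t<\infty$, again because $b<\pi$, and vanish in the limit.

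The main obstacle is the explicit identification of the two harmonic measures with the announced densities, which requires threading the conformal map carefully through both boundary components with the correct orientation and branch choice to produce the $-\cos(\pi\theta)$ versus $+\cos(\pi\theta)$ signs. The extension step is conceptually standard once one realizes that a $\cos(b(z-\tfrac{1}{2}))$ factor with $b\in(a,\pi)$ is the natural auxiliary object respecting the critical growth rate $\pi$ of the strip of width one; a Gaussian-type damping in $y$ is not enough.
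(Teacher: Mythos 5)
Your proof is correct. Note that the paper does not actually prove this lemma: it is quoted from Hirschman~\cite{H52} (see also~\cite[Lemma~1.3.8]{G08}), and your argument is precisely the classical one found in those references\,---\,harmonic majorization of the subharmonic function $\log|g|$ on the strip, with the two boundary densities identified as the harmonic measures obtained by pulling back the half-plane Poisson kernel through $z\mapsto e^{\ci\pi z}$ (your computation of $\di\omega_0$ and $\di\omega_1$, and their identification with $(1-\theta)\beta_{1-\theta}$ and $\theta\beta_\theta$, checks out, including the normalization $\omega_0(\R)+\omega_1(\R)=1-\theta+\theta=1$). The Phragm\'en--Lindel\"of regularization by $\exp(-\eps\cos(b(z-\tfrac12)))$ with $a<b<\pi$ is likewise the standard way to relax uniform boundedness to the stated sub-$e^{\pi|\operatorname{Im}z|}$ growth, and your integrability checks against the $e^{-\pi|t|}$ tails of $\beta_\theta$ are the right ones; the only detail left implicit is the routine truncation $\max(\log|g|,-M)$, $M\to\infty$, needed to justify the majorization step when $g$ has boundary zeros.
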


We are now prepared to prove Theorem~\ref{thm_hirschman}.

\begin{proof}[Proof of Theorem~\ref{thm_hirschman}]
 For $x \in [0,1]$, define $q_{x}$ as the H\"older conjugate of $p_{x}$ such that $p_{x}^{-1} + q_{x}^{-1} = 1$. Hence, using the definition of $p_x$ in~\eqref{eq_densi}, we have
\begin{align}
  \frac{1}{q_x} = \frac{1-x}{q_0} + \frac{x}{q_1} \,.
\end{align}

Now for our fixed $\theta \in (0,1)$ the operator $G(\theta)$ is bounded by assumption and thus allows a polar decomposition, $G(\theta) = U \Delta$, where $\Delta$ is positive semi-definite and $U$ is a partial isometry~\cite[Theorem~VI.10]{simon_book} satisfying $\Delta U^{\dag}U = U^{\dag}U \Delta = \Delta$. Then define $X(z)$ via
\begin{align}
  X(z)^{\dag} =  C^{-p_{\theta} \left( \frac{1-{z}}{q_0} + \frac{{z}}{q_1} \right)} \Delta^{ p_{\theta} \left( \frac{1-{z}}{q_0} + \frac{{z}}{q_1} \right)  } U^{\dag}
  \qquad \textrm{with} \qquad C := \norm{ \Delta }_{p_{\theta}} = \norm{ G(\theta) }_{p_{\theta}} < \infty \,.
\end{align}
We find that $z \mapsto X(z)$ is anti-holomorphic on $S$ and
\begin{align}
  \norm{ X(x + \ci y) }_{q_x}^{q_x} = \tr\, \left( C^{-1} \Delta \right)^{p_{\theta}q_{x}\left( \frac{1-x}{q_0} + \frac{x}{q_1} \right)} = \tr\, \left( C^{-1} \Delta \right)^{p_{\theta}} = 1 \,.
\end{align}

Consequently, the Hilbert-Schmidt inner product $g(z) := \tr\, X(z)^{\dag}G(z)$ is holomorphic and bounded on $S$ because the H\"older inequality (see, e.g.,~\cite[Theorem~7.8]{weidmann_book}) yields
\begin{align}
   |g(x + \ci y)| \leq \norm{ X(x + \ci y) }_{q_x}  \norm{ G(x + \ci y) }_{p_x} \leq \norm{ G(x + \ci y) }_{p_x}  ,
   \label{eq:hoelder-used} \,.
\end{align}
Hence, our assumptions on $G(z)$ imply that $g(z)$ satisfies the assumptions of Lemma~\ref{lm:hirschman}.

It remains to verify the following relations using the H\"older inequality in~\eqref{eq:hoelder-used}:
\begin{align}
  g(\theta) &= \tr\, X(\theta)G(\theta) = C^{-p_{\theta}\frac{1}{q_{\theta}}}\, \tr\, \Delta^{p_{\theta}-1} U^{\dag} U \Delta = C^{1-p_{\theta}}\, \tr\, \Delta^{p_{\theta}} = \norm{G(\theta)}_{p_{\theta}} \,, \\
  |g(\ci t)| &\leq \norm {G(\ci t)}_{p_0} \,, \qquad \textrm{and} \qquad |g(1+\ci t)| \leq \norm {G(1+\ci t)}_{p_1}
  \,.
\end{align}
Substituting this into Lemma~\ref{lm:hirschman} yields the desired result.
\end{proof}



\section{Proof of Lemma~\ref{lem_LiebRep}} \label{app_Derivative}

The first expression for the derivative given in~\eqref{eq_derivative} is well known and can be derived using integral representations of the operator logarithm (see, e.g., \cite{carlen_book}). Now let $A=\sum_k\mu_k\ket{k}\!\bra{k}$ for an orthonormal eigenbasis $\{ \ket{k} \}_k$ of $A$. The claim is thus equivalent to
\begin{align}
\int_{-\infty}^{\infty} \di t \, \beta_0(t)\, \mu_i^{-\frac{1}{2} - \frac{\ci t}{2}}\mu_j^{-\frac{1}{2}+\frac{\ci t}{2}}\bra{k}H\ket{\ell}=\int_{0}^{\infty} \!\di \lambda\left(\mu_i+\lambda\right)^{-1}\left(\mu_j+\lambda\right)^{-1}\bra{k}H\ket{\ell}\qquad \forall\, k,\ell \,.
\end{align}
Thus, it suffices to show that
\begin{align}
\frac{1}{\sqrt{xy}}\int_{-\infty}^{\infty} \di t \, \beta_0(t)\,\left(\frac{y}{x}\right)^{\frac{\ci t}{2}}=\int_{0}^{\infty} \!\di \lambda\left(x+\lambda\right)^{-1}\left(y+\lambda\right)^{-1} \qquad \forall\, x, y > 0 \,.
\end{align}
Since $\beta_0(t)$ is symmetric in $t$, we have
\begin{align}
\frac{1}{\sqrt{xy}} \int_{-\infty}^{\infty} \di t \, \beta_0(t)\,\left(\frac{y}{x}\right)^{\frac{\ci t}{2}}
=\frac{1}{\sqrt{xy}} \int_{-\infty}^{\infty} \di t \, \beta_0(t)\,\cos\left(\frac{t}{2}\log\left(\frac{y}{x}\right)\right) 
=\frac{1}{y-x}\log\left(\frac{y}{x}\right) ,
\end{align}
and the claim follows because we also have
\begin{align}
\int_{0}^{\infty} \!\di \lambda\left(x+\lambda\right)^{-1}\left(y+\lambda\right)^{-1}=\frac{1}{y-x}\log\left(\frac{y}{x}\right) .
\end{align}


\section{Additional recoverability bounds } \label{app_recovery}
The purpose of this section is to present two additional entropy inequalities that also follow from the multivariate extension of the GT inequality given by Corollary~\ref{cor_GT_steinHirschman}. These two bounds have been proven before~\cite{JRSWW15,DW15}.

\begin{proposition} \label{prop_addBounds}
Let $\rho_{AB}$ and $\sigma_{AB}$ be positive semi-definite matrices on $\cH_A \otimes \cH_B$ such that $\rho_{AB} \ll \sigma_{AB}$ and $\tr\, \rho_{AB}=1$ and let $\cR_{\sigma_{AB},\tr_B}^{[t]} $ be as defined in~\eqref{eq:rotated_petz_partial}. Then
\begin{align}
- \int_{-\infty}^\infty \di t\, \beta_0(t) \log F\big(\rho_{AB}, \cR_{\sigma_{AB},\tr_B}^{[t]}(\rho_A)\big) 
&\leq D(\rho_{AB} \| \sigma_{AB}) - D(\rho_A \| \sigma_A) \label{eq_MarkLB} \\
&\leq  \int_{-\infty}^\infty \di t\, \beta_0(t) D_2\big(\rho_{AB} \big\| \cR_{\sigma_{AB},\tr_B}^{[t]}(\rho_A)\big) \ , \label{eq_MarkUB}
\end{align}
where $D_2(\rho\|\sigma):=\log \tr \rho^2 \sigma^{-1}$ is Petz' R\'enyi relative entropy of order $2$.
\end{proposition}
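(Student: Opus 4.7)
Both bounds will follow from Lieb and Ruskai's strategy of combining the Peierls-Bogoliubov inequality with the four-matrix extension of the Golden-Thompson inequality (Corollary~\ref{cor_GT_steinHirschman} at $n=4$), as flagged at the end of Section~\ref{sec_app}. Crucially, because Corollary~\ref{cor_GT_steinHirschman} is already of Stein-Hirschman type, with the logarithm sitting inside the $t$-integral against $\beta_0$, it delivers the $\int \beta_0(t)\log(\cdot)\,\di t$ structure demanded by Proposition~\ref{prop_addBounds} directly, with no Jensen relaxation. The two bounds use the corollary at different Schatten exponents: $p=2$ produces Petz's R\'enyi-$2$ divergence for the upper bound, while $p=1$ produces the fidelity for the lower bound.

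For the upper bound I would apply Peierls-Bogoliubov in the form $\log\tr\exp(\log\rho_{AB}+B)\geq \tr\rho_{AB}B$ with $B := \log\rho_{AB}-\log\sigma_{AB}+\log\sigma_A-\log\rho_A$ (where $\log\rho_A$ and $\log\sigma_A$ are tensored with $\id_B$). The right-hand side equals $D(\rho_{AB}\|\sigma_{AB})-D(\rho_A\|\sigma_A)$ exactly, and the left-hand side is $\log\tr\exp(L)$ for $L := 2\log\rho_{AB}-\log\sigma_{AB}+\log\sigma_A-\log\rho_A$. I would then bound $\log\tr\exp(L)$ using Corollary~\ref{cor_GT_steinHirschman} at $n=4$, $p=2$ with $H_1=\log\rho_{AB}$, $H_2=-\tfrac12\log\sigma_{AB}$, $H_3=\tfrac12\log\sigma_A$, $H_4=-\tfrac12\log\rho_A$, chosen so that $\sum_k H_k = L/2$. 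A direct computation of $\|\prod_k\exp((1+\ci t)H_k)\|_2^2$, followed by cyclic rearrangement and the $t\leftrightarrow -t$ symmetry of $\beta_0$, identifies it as $\tr[\rho_{AB}^2\,\mathcal{R}_{\sigma_{AB},\tr_B}^{[t]}(\rho_A)^{-1}]$, whose logarithm is $D_2(\rho_{AB}\|\mathcal{R}_{\sigma_{AB},\tr_B}^{[t]}(\rho_A))$. Multiplying the resulting bound by $2$ to undo the square root in $\|\cdot\|_2$ then yields~\eqref{eq_MarkUB}.

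For the lower bound I would apply Peierls-Bogoliubov instead with $B := \tfrac{1}{2}(-\log\rho_{AB}+\log\sigma_{AB}-\log\sigma_A+\log\rho_A)$, so that $\tr\rho_{AB}B = -\tfrac12[D(\rho_{AB}\|\sigma_{AB})-D(\rho_A\|\sigma_A)]$ and $\log\rho_{AB}+B = K/2$ with $K := \log\rho_{AB}+\log\sigma_{AB}-\log\sigma_A+\log\rho_A$. I would then invoke Corollary~\ref{cor_GT_steinHirschman} at $n=4$, $p=1$ with $H_1=\tfrac12\log\rho_{AB}$, $H_2=\tfrac12\log\sigma_{AB}$, $H_3=-\tfrac12\log\sigma_A$, $H_4=\tfrac12\log\rho_A$. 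The product $\prod_k\exp((1+\ci t)H_k) = \rho_{AB}^{(1+\ci t)/2}\sigma_{AB}^{(1+\ci t)/2}(\sigma_A^{-(1+\ci t)/2}\rho_A^{(1+\ci t)/2}\otimes\id_B)$ has the same Schatten $1$-norm as $\rho_{AB}^{1/2}\sigma_{AB}^{(1+\ci t)/2}(\sigma_A^{-(1+\ci t)/2}\rho_A^{1/2}\otimes\id_B)$, because the extra phases $\rho_{AB}^{\ci t/2}$ on the left and $\rho_A^{\ci t/2}\otimes\id_B$ on the right are unitary and drop out of $\norm{\cdot}_1$. Writing $\mathcal{R}_{\sigma_{AB},\tr_B}^{[t]}(\rho_A)=A_tA_t^\dag$ with $A_t := \sigma_{AB}^{(1+\ci t)/2}(\sigma_A^{-(1+\ci t)/2}\rho_A^{1/2}\otimes\id_B)$ identifies this $1$-norm as $\sqrt{F(\rho_{AB},\mathcal{R}_{\sigma_{AB},\tr_B}^{[t]}(\rho_A))}$. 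Combining with the Peierls-Bogoliubov estimate and multiplying through by $2$ then produces~\eqref{eq_MarkLB}.

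The main obstacle I expect is the bookkeeping of powers and phases: choosing exactly the right factors of $\tfrac12$ and signs in the $H_k$'s so that Corollary~\ref{cor_GT_steinHirschman} produces the rotated Petz map $\mathcal{R}_{\sigma_{AB},\tr_B}^{[t]}(\rho_A)$ or its inverse, and invoking unitary invariance of the Schatten $1$-norm at the right moment so that $\rho_A^{1/2}$ (rather than $\rho_A^{(1+\ci t)/2}$) ends up inside the fidelity. The $t\leftrightarrow -t$ symmetry of $\beta_0$ will be used repeatedly to freely interchange $(1+\ci t)/2$ and $(1-\ci t)/2$ in the Stein-Hirschman integrand.
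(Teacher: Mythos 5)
Your proposal is correct and matches the paper's proof essentially step for step: both bounds apply the Peierls--Bogoliubov inequality with $G_1=\log\rho_{AB}$ and exactly the $G_2$'s you identify, followed by Corollary~\ref{cor_GT_steinHirschman} at $n=4$ with $p=2$ (upper bound, giving $D_2$) and $p=1$ (lower bound, giving fidelity after dropping unitary phases $\rho_{AB}^{\ci t/2}$, $\rho_A^{\ci t/2}$ by unitary invariance). The bookkeeping you flag as the main obstacle — the factors of $\tfrac12$, the $t\leftrightarrow -t$ symmetry of $\beta_0$, and the cyclic rearrangement under trace — is indeed exactly what the paper carries out, and your choices of $H_k$ are the ones that make it work.
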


\begin{proof}
Klein's inequality~\cite{klein31} states that for any Hermitian matrices $H_1$, $H_2$ and for any differentiable convex function $f:\R \to \R$, we have
$\tr\bigl(f(H_1)-f(H_2) - (H_1-H_2)f'(H_2)\bigr)\geq 0$.
If we apply Klein's inequality with $f(\cdot)=\exp(\cdot)$, $H_1=G_1+G_2$ and $H_2=G_1+ \id \, \tr\, G_2\exp(G_1)$ we obtain the Peierls-Bogoliubov inequality (see, e.g.,~\cite{ruelle_book}) which tells us that for Hermitian matrices $G_1$ and $G_2$, we have
\begin{align} \label{eq_PBineq}
-\tr \, G_2\exp(G_1) \geq- \log \tr \exp(G_1+G_2)\,.
\end{align}

We first prove~\eqref{eq_MarkLB}.
Let $\rho_{AB}$ and $\sigma_{AB}$ be positive semi-definite matrices on $\cH_A \otimes \cH_B$ such that $\rho_{AB} \ll \sigma_{AB}$ and $\tr \rho_{AB} =1$.
For $G_1 =\log\rho_{AB}$ and $G_2 =\frac{1}{2}(\log\rho_{A} \otimes \id_B +\log\sigma_{AB}-\log\sigma_{A} \otimes \id_B -\log\rho_{AB})$, this gives
\begin{align}
&D(\rho_{AB} \| \sigma_{AB}) - D(\rho_A \| \sigma_A) \\
&\qquad\geq-2\log\tr \, \exp\left(\frac{1}{2}\left(\log\rho_{A} \otimes \id_B+\log\sigma_{AB}-\log\sigma_{A} \otimes \id_B +\log\rho_{AB}\right)\right)\\
&\qquad\geq-2 \int_{-\infty}^\infty \!\! \di t\,  \beta_0(t)\, \log \norm{\rho_{AB}^{\frac{1+\ci t}{2}}\sigma_{AB}^{\frac{1+\ci t}{2}} \Big( \sigma_{A}^{-\frac{1+\ci t}{2}}\rho_{A}^{\frac{1+\ci t}2} \otimes \id_B \Big)}_{1}\\
&\qquad=-  \int_{-\infty}^\infty \!\! \di t\, \beta_0(t) \log F\left(\rho_{AB},\sigma_{AB}^{\frac{1+\ci t}{2}} \Big(\sigma_{A}^{-\frac{1+\ci t}{2}}\rho_{A}\sigma_{A}^{-\frac{1-\ci t}{2}} \otimes \id_B \Big) \sigma_{AB}^{\frac{1-\ci t}{2}}\right)\label{eq_stopLB}   ,
\end{align}
where the penultimate step uses the extension of the GT inequality from Corollary~\ref{cor_GT_steinHirschman} for $n=4$ and $p=1$.  

It remains to prove~\eqref{eq_MarkUB}.
Applying the Peierls-Bogoliubov inequality~\eqref{eq_PBineq} for $G_1 = \log \rho_{AB}$ and $G_2 = \log \sigma_A \otimes \id_B + \log \rho_{AB} - \log \rho_{A} \otimes \id_B - \log \sigma_{AB}$, we find
\begin{align}
D(\rho_{AB} \| \sigma_{AB}) - D(\rho_A \| \sigma_A) &\leq  \log \tr \exp(2\log \rho_{AB} +\log \sigma_A - \log \rho_{A} - \log \sigma_{AB}) \\
&\leq \int_{-\infty}^\infty \di t \, \beta_0(t) \, \log \tr \rho_{AB}^2 \sigma_{AB}^{-\frac{1+\ci t}{2}} \Big( \sigma_A^{\frac{1+\ci t}{2}} \rho_{A}^{-1} \sigma_A^{\frac{1-\ci t}{2}} \otimes \id_B \Big) \sigma_{AB}^{-\frac{1-\ci t}{2}} \\
&= \int_{-\infty}^\infty \di t \, \beta_0(t) \, D_2\left(\rho_{AB} \middle\| \sigma_{AB}^{\frac{1+\ci t}{2}} \Big( \sigma_A^{-\frac{1+\ci t}{2}} \rho_{A} \sigma_A^{\frac{1-\ci t}{2}} \otimes \id_B \Big) \sigma_{AB}^{\frac{1-\ci t}{2}}\right)  ,
\end{align}
where the second inequality follows by Corollary~\ref{cor_GT_steinHirschman} applied for $n=4$ and $p=2$.
\end{proof}
Following the same line of arguments as in the proof in the proof of Corollary~\ref{cor:strengthened_mono_d},~\eqref{eq_MarkLB} can be extended to the case of arbitrary trace-preserving completely positive maps. This then reproduces a result in~\cite[Section 3]{JRSWW15}.

\bibliographystyle{arxiv_no_month}
\bibliography{bibliofile}
      
\end{document}